\documentclass[a4paper,11pt]{article}

\usepackage{amsmath}
\usepackage{amsthm}
\usepackage{amssymb}

\usepackage{algorithmic}
\usepackage{algorithm}
\usepackage{xcolor}
\usepackage{graphicx}

\usepackage{cite}
\usepackage{paralist}



\usepackage[left=0.8in,top=1.0in,right=0.8in,bottom=1.0in,nohead]{geometry}

\theoremstyle{definition}

\newtheorem{definition}{Definition}
\newtheorem{theorem}{Theorem}

\newtheorem{lemma}[theorem]{Lemma}


%
%

\graphicspath{{figs/}}

\def\MC#1{{\mathcal #1}}
\def\MBB#1{{\mathbb #1}}
\def\MB#1{{\mathbf #1}}

\newcommand{\BIGLR}[3]{{\left#1#3\right#2}}
\newcommand{\BIGP}[1]{{\BIGLR{(}{)}{#1}}}
\newcommand{\BIGCP}[1]{{\BIGLR{[}{]}{#1}}}
\newcommand{\BIGBP}[1]{{\BIGLR{\{}{\}}{#1}}}
\newcommand{\CEIL}[1]{{\BIGLR{\lceil}{\rceil}{#1}}}
\newcommand{\FLOOR}[1]{{\BIGLR{\lfloor}{\rfloor}{#1}}}
\newcommand{\BIGC}[1]{{\BIGLR{|}{|}{#1}}}

\long\def\longdelete#1{}


\title{Competitive Design and Analysis for Machine-Minimizing Job Scheduling Problem
\thanks{This work was supported in part by National Science Council (NSC), Taiwan, under Grants NSC99-2911-I-002-055-2, NSC98-2221-E-001-007-MY3, and Karlsruhe House of Young Scientists (KHYS), KIT, Germany, under a Grant of Visiting Researcher Scholarship.}}



\author{Mong-Jen~Kao$^2$ \and Jian-Jia~Chen$^1$ 
\and Ignaz~Rutter$^1$ \and Dorothea~Wagner$^1$ \\
\\
%
$^1$ Faculty for Informatics, Karlsruhe Institute of Technology (KIT), Germany. \\
$^2$ Research Center for Infor. Technology Innovation, Academia Sinica, Taiwan. \\
\textit{Email: mong@citi.sinica.edu.tw, j.chen@kit.edu, 
rutter@kit.edu, dorothea.wagner@kit.edu}
}

\date{}

\begin{document}

\maketitle


\begin{abstract}
We explore the machine-minimizing job scheduling problem, which has a rich history in the line of research, under an online setting.
We consider systems with arbitrary job arrival times, arbitrary job deadlines, and unit job execution time. For this problem, we present a lower bound $2.09$ on the competitive factor of \emph{any} online algorithms, followed by designing a $5.2$-competitive online algorithm.
We also point out a false claim made in an existing paper of Shi and Ye regarding a further restricted case of the considered problem.
To the best of our knowledge, what we present is the first concrete result concerning online machine-minimizing job scheduling with arbitrary job arrival times and deadlines.
%
\end{abstract}

%


\section{Introduction}

Scheduling jobs with interval constraints is one of the most well-known models in classical scheduling theory that provides an elegant formulation for numerous applications and which also has a rich history in the line of research that goes back to the 1950s. For example, assembly line placement of circuit boards~\cite{Crama98theassembly,Spieksma99}, time-constrained communication scheduling~\cite{Adler:1998:STC:277651.277693}, adaptive rate-controlled scheduling for multimedia applications~\cite{Yau:1997:ARS:244130.244177,Rajugopal:1998:ARC:284583.284590}, etc.

\smallskip

In the basic framework, we are given a set of jobs, each associated with a set of time intervals during which it can be scheduled. 
Scheduling a job means selecting one of its associated time interval. The goal is to schedule all the jobs on a minimum number of machines such that no two jobs assigned to the same machine overlap in time.
%
Two variations have been considered in the literature, differing in the way how the time intervals of the jobs are specified. In the \emph{discrete machine minimization}, the time intervals are listed explicitly as the input, while in the \emph{continuous} version, the set of time intervals for each job is specified by a release time, a deadline, and an execution time.

\smallskip

In terms of problem complexity, it is known that deciding whether one machine suffices to schedule all the jobs is already strongly NP-complete~\cite{Garey:1979:CIG:578533}.
Raghavan and Thompson~\cite{Raghavan:1987:RRT:45291.45296} gave an $O\BIGP{{\log n} / {\log\log n}}$-approximation via randomized rounding of linear programs for both versions. This result is also the best known approximation for the discrete version. An $\Omega\BIGP{\log\log n}$
lower-bound on the approximation ratio is given by Chuzhoy and Naor~\cite{Chuzhoy:2006:NHR:1183907.1183908}.
For the continuous machine minimization, Chuzhoy et al.~\cite{Chuzhoy:2004:MMS:1032645.1033163} improved the factor to $O\BIGP{\sqrt{\log n}}$. When the number of machines used by the optimal schedule is small, they provided an $O\BIGP{k^2}$-approximation, where $k$ is the number of machines used by the optimal schedule.
Recently, Chuzhoy et al.~\cite{Chuzhoy:2009:RMJ:1616497.1616504} further improved their previous result to a (large) constant.

\smallskip

In addition,
results regarding special constraints have been proposed as well. Cieliebak et al.~\cite{springerlink:10.1007/1-4020-8141-318} studied the situation when the lengths of the time intervals during which the jobs can be scheduled are small. Several exact algorithms and hardness results were presented.
Yu and Zhang~\cite{Yu200997} considered two special cases. When the jobs have equal release times, they provided a $2$-approximation. When the jobs have equal execution time, they showed that the classical greedy best-fit algorithm achieves a $6$-approximation.

\smallskip

%

From the perspective of utilization-enhancing, a problem that can be seen as dual to machine minimization is the \emph{throughput maximization} problem, whose goal is to maximize the number of jobs that can be scheduled on a single machine.
%
Chuzhoy et al.~\cite{Chuzhoy:2006:AAJ:1235273.1235278} provided an $O\BIGP{\frac{e}{e-1}+\epsilon}$-approximation for any $\epsilon > 0$ for both discrete and continuous settings, where $e$ is the Euler's number. Spieksma~\cite{Spieksma99} proved that the discrete version of this problem is MAX-SNP hard, even when the set of time intervals for each job has cardinality two.

\smallskip

Several natural generalizations of this problem have been considered. Bar-Noy et al.~\cite{Bar-Noy:2002:ATM:586839.586857} considered the weighted throughput maximization problem, in which the objective is to maximize the weighted throughput for a set of weighted jobs, and presented a $2$-approximation.
Furthermore, when multiple jobs are allowed to share the time-frame of the same machine, i.e., the concept of context switch is introduced to enhance the throughput, Bar-Noy et al.~\cite{Bar-Noy:2001:UAA:502102.502107} presented a $5$-approximation and a $\BIGP{\frac{2e-1}{e-1}+\epsilon}$-approximation for both weighted and unweighted versions.
When the set of time intervals for each job has cardinality one, i.e., only job selection is taken into consideration to maximize the weighted throughput, Calinescu et al.~\cite{Calinescu:2011:IAA:2000807.2000816} presented a $\BIGP{2+\epsilon}$-approximation while Bansal et al.~\cite{Bansal:2006:QUF:1132516.1132617} presented a quasi-PTAS.

%


\paragraph{Our Focus and Contribution.}

In this paper, we explore the \emph{continuous} machine-minimizing job scheduling problem under an online setting and presents novel competitive analysis for this problem. We consider a real-time system in which we do not have prior knowledge on the arrival of a job until it arrives to the system, and the scheduling decisions have to be made online.
%
%
As an initial step to exploring the general problem complexity, we consider the case for which all the jobs have \emph{unit} execution time.
In particular, we provide for this problem:

\begin{itemize}
	\item
		a $2.09$ lower bound on the competitive factor of \emph{any} online algorithm, and
		
		\smallskip
		
	\item
		a $5.2$ competitive online algorithm.
\end{itemize}

To the best of our knowledge, this is the first result presented under the concept of real-time machine minimization with arbitrary job arrival times and deadlines.

\smallskip

We would also like to point out a major flaw in a previous paper~\cite{Shi:2008:OBP:1328331.1328337} in which the authors claimed to have an optimal $2$-competitive algorithm for a restricted case where the jobs have a universal deadline.
In fact, our lower bound proof is built exactly under this restricted case,
thereby showing that even when the jobs have a universal deadline, any \emph{feasible} online algorithm has a competitive factor no less than $2.09$.

%


\section{Notations, Problem Model, and Preliminary Statements}

This section describes the job scheduling model adopted in this paper, followed by a formal problem definition and several technical lemmas used throughout this paper.

\subsection{Job Model} 

We consider a set of real-time jobs, arriving to the system
dynamically. When a job $j$ arrives to the system, say, at time $t$, its
\emph{arrival time} $a_j$ is defined to be $t$ and the job is put into the \emph{ready queue}. 
The \emph{absolute deadline}, or, deadline for simplicity, for which $j$ must finish its execution is denoted by $d_j$.
The amount of time $j$ requires to finish its execution, also called the \emph{execution time} of $j$, is denoted by $c_j$.
We consider systems with \emph{discretized timing line} and \emph{unit jobs}, i.e., 
$a_j$ and $d_j$ are non-negative integers, and $c_j = 1$.
When a job finishes, it is removed from the ready queue.
For notational brevity, for a job $j$, we implicitly use a
pair $j = \BIGP{a_j,d_j}$ to denote the corresponding properties.
%
%

%

\subsection{Job Scheduling}

A schedule $\MB{S}$ for a set of jobs $\MC{J}$ is to decide for each job $j \in \MC{J}$ the time at which $j$ starts its execution. $\MB{S}$ is said to be \emph{feasible} if each job starts its execution no earlier than its arrival and has its execution finished at its deadline.
Moreover, we say that a schedule $\MB{S}$ follows the earliest-deadline-first (EDF) principle if 
whenever there are multiple choices on the jobs to schedule, it always gives the highest priority to the one with earliest deadline.

Let $\#_\MB{S}(t)$ be the number of jobs which are 
scheduled for execution at time $t$ in schedule $\MB{S}$. The number of machines $\MB{S}$ requires to finish the execution of 
the entire job set, denoted by $M(\MB{S})$, is then $\max_{t \ge 0}\#_\MB{S}(t)$.
For any $0\le \ell < r$, let 
$$\MC{J}(\ell,r) = \BIGC{\BIGBP{j:j\in \MC{J}, \: \ell\le a_j, \: d_j \le r}}$$ 
denote the total amount of workload, i.e., the total number of jobs due to the unit execution time of the jobs in our setting, that arrives and has to be done within the time interval $\BIGCP{\ell,r}$.
The following lemma provides a characterization of a feasible EDF schedule for any job set.

\begin{lemma}
\label{lemma-feasibility-characterization}
For any set $\MC{J}$ of unit jobs, a schedule $\MB{S}$ following the earliest-deadline-first principle is feasible if and only if for any $0\le \ell < r$,
$$\sum_{\ell\le t<r}\#_\MB{S}(t) \ge \MC{J}(\ell,r).$$
\end{lemma}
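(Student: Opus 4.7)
The plan is to argue the two directions of the biconditional separately: the forward implication by a simple counting argument, and the converse by a contradiction argument that carefully exploits the EDF priority rule.

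For necessity, I would observe that any job $j$ counted in $\MC{J}(\ell,r)$ has $a_j \ge \ell$ and $d_j \le r$, so by feasibility its scheduled start time $t$ satisfies $\ell \le a_j \le t$ and $t+1 \le d_j \le r$, placing $t$ in the integer window $[\ell,r)$. Summing over all such jobs and using that $\#_\MB{S}(t)$ counts exactly the jobs scheduled at time $t$ yields $\sum_{\ell \le t < r} \#_\MB{S}(t) \ge |\MC{J}(\ell,r)|$.

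For sufficiency I plan a proof by contradiction. Let $n = M(\MB{S})$ and suppose the workload inequality holds for every interval, yet some job $j$ misses its deadline in $\MB{S}$. The first step is the standard EDF observation: at every time $t \in [a_j, d_j)$, job $j$ is available but not executed, so by EDF all $n$ slots must be occupied by jobs of deadline at most $d_j$, giving $\#_\MB{S}(t) = n$. The second step extends this structure leftward: I would let $\ell^*$ be the smallest integer for which the property ``every slot is busy and every executing job has deadline $\le d_j$'' holds throughout $[\ell^*, d_j)$; by Step~1 we have $\ell^* \le a_j$. The third and most delicate step is to show that every job executed in $[\ell^*, d_j)$ also has arrival time $\ge \ell^*$: otherwise some job $j'$ with $a_{j'} < \ell^*$ and $d_{j'} \le d_j$ would be available but unscheduled at time $\ell^* - 1$, which by EDF would force $\#_\MB{S}(\ell^* - 1) = n$ with every executing job having deadline $\le d_{j'} \le d_j$, contradicting the minimality of $\ell^*$.

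Combining these observations, every job executed in $[\ell^*, d_j)$, together with $j$ itself, lies in $\MC{J}(\ell^*, d_j)$, so $|\MC{J}(\ell^*, d_j)| \ge n \cdot (d_j - \ell^*) + 1 > \sum_{\ell^* \le t < d_j} \#_\MB{S}(t)$, contradicting the hypothesis on the interval $[\ell^*, d_j)$. The main obstacle I anticipate is the leftward extension together with the minimality argument in Step~3; once the correct interval $[\ell^*, d_j)$ is identified, the rest reduces to routine accounting.
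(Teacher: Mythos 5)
Your overall strategy is sound and is essentially the paper's argument in different packaging: the paper performs an explicit leftward descent ($j \to j' \to j'' \to \cdots$), repeatedly extending the interval until the workload bound is violated, whereas you jump directly to the minimal left endpoint $\ell^*$ and derive the violation from minimality. The necessity direction is fine.

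There is, however, one unjustified step that contaminates Steps 1--3 and your final count: the claim that whenever $j$ is pending but not executed, ``all $n$ slots must be occupied,'' i.e.\ $\#_\MB{S}(t) = M(\MB{S})$. In this paper's model the schedule itself chooses how many jobs to run at each time; $M(\MB{S}) = \max_t \#_\MB{S}(t)$ is merely the maximum of these choices, and the EDF principle only dictates \emph{which} jobs fill the $\#_\MB{S}(t)$ chosen slots, not that all $M(\MB{S})$ machines are used at every step (indeed, \emph{packing-via-density}$(c)$ runs $\CEIL{c\cdot\hat\varrho\BIGP{\MC{J}(t)}}$ jobs at time $t$, which varies with $t$). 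Consequently $\#_\MB{S}(\ell^*-1) = n$ does not follow in Step 3, so minimality of your $\ell^*$ (whose definition includes ``every slot is busy'') is not actually contradicted, and the bound $\MC{J}(\ell^*,d_j) \ge n\cdot(d_j-\ell^*)+1$ is not what you obtain. The repair is local: define $\ell^*$ only by the condition ``every job executed in $[\ell^*, d_j)$ has deadline at most $d_j$'' (which holds for $\ell^* = a_j$ by EDF, since $j$ is pending throughout), run the same minimality argument for Step 3, and then count against the true capacity: every job scheduled in $[\ell^*, d_j)$ then lies in $\MC{J}(\ell^*, d_j)$, and $j$ lies there too but is not among them, so $\MC{J}(\ell^*, d_j) \ge \sum_{\ell^* \le t < d_j} \#_\MB{S}(t) + 1$, contradicting the hypothesis. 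With that adjustment your proof is correct.
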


\begin{proof}
The "only if" part is easy to see. For any feasible schedule $\MB{S}$ and any $0 \le \ell < r$, the amount of workload which arrives and has to be done with the time interval $\BIGCP{\ell,r}$ must have been scheduled, and, can only be scheduled within the interval $\BIGCP{\ell,r}$.
Since $\MB{S}$ is feasible, we get $\sum_{\ell\le t<r}\#_\MB{S}(t) \ge \MC{J}(\ell,r)$ for any $0\le \ell < r$.

\medskip

Below we prove the "if" part.
Assume for contradiction that a schedule $\MB{S}$ following the earliest-deadline-first principle fails to be feasible when we have $\sum_{\ell\le t<r}\#_\MB{S}(t) \ge \MC{J}(\ell,r)$, for all $0\le \ell < r$.
Let $j = (a_j, b_j)$ be one of the jobs which misses their deadlines in $\MB{S}$, and $\MC{S}(a_j, b_j)$ be the set of jobs scheduled to be executed during the time interval $\BIGCP{a_j, b_j}$ in $\MB{S}$.
Since $\MB{S}$ follows the EDF principle, all the jobs in $\MC{S}(a_j, b_j)$ have the deadlines no later than $b_j$. By the assumption, we know that, $$\sum_{a_j \le t < b_j}\#_\MB{S}(t) \ge \MC{J}(a_j, b_j).$$
Therefore at least one job in $\MC{S}(a_j, b_j)$ has arrival time earlier than $a_j$. Let $j^\prime = (a_{j^\prime}, b_{j^\prime})$ be one of such job. By exactly the same argument, we know that all the jobs in $\MC{S}(a_{j^\prime}, a_j)$, which is the set of jobs scheduled to be executed during $\BIGCP{a_{j^\prime}, a_j}$, have the deadlines no later than $b_{j^\prime}$. Since $$\sum_{a_{j^\prime} \le t < a_j}\#_\MB{S}(t) \ge \MC{J}(a_{j^\prime}, a_j),$$
at least one job in $\MC{S}(a_{j^\prime}, a_j)$ has arrival time earlier than $a_{j^\prime}$. Let $j^{\prime\prime} = (a_{j^{\prime\prime}}, b_{j^{\prime\prime}})$ be one of such job. 
Note that the arrival times of $j$, $j^\prime$, and $j^{\prime\prime}$, are strictly decreasing, i.e., $a_{j^{\prime\prime}} < a_{j^\prime} < a_j$, and the deadlines are non-increasing, i.e., $b_{j^{\prime\prime}} \le b_{j^\prime} \le b_j$.
Therefore, each time we apply this argument, we get an interval whose left-end is strictly smaller than the previous one, while the right-end is no larger than the previous one. By assumption, this gives another job that leads to another interval with the same property.
Hence, by repeating the above argument, since number of jobs which have arrived to the system up to \emph{any} moment is finite, eventually, we get a contradiction to the assumption that $\sum_{\ell\le t<r}\#_\MB{S}(t) \ge \MC{J}(\ell,r)$ for all $0\le \ell < r$, which implies that the schedule $\MB{S}$ must be feasible.
\qed
\end{proof}

In the \emph{offline machine-minimizing job scheduling} problem, we 
wish to find a schedule $\MB{S}_\MC{J}$ for a given set of jobs $\MC{J}$ such that $M(\MB{S}_\MC{J})$ is minimized. 
For a better depiction of this notion, for any $0\le \ell < r$, let 
$$\rho\BIGP{\MC{J},\ell,r} = \frac{\MC{J}(\ell,r)}{r-\ell}$$ denote the density of workload $\MC{J}(\ell,r)$, and 
%
let $OPT\BIGP{\MC{J}}$ denote the number of machines required by an optimal schedule for $\MC{J}$.
The following lemma shows that, when the jobs have unit execution time, there is a direct link between $OPT\BIGP{\MC{J}}$ and the density of workload.

%

\begin{lemma}
\label{lemma-opt-offline-density}
For any set of jobs $\MC{J}$,
we have
$OPT\BIGP{\MC{J}} = \CEIL{\max_{0\le\ell<r}\rho\BIGP{\MC{J},\ell,r}}.$
\end{lemma}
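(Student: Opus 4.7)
I would prove the two inequalities separately. For the lower bound $OPT\BIGP{\MC{J}} \ge \CEIL{\max_{0 \le \ell < r} \rho\BIGP{\MC{J},\ell,r}}$, a capacity count is enough: in any feasible schedule that uses $m$ machines, the $\MC{J}(\ell,r)$ unit jobs that arrive no earlier than $\ell$ and are due by $r$ can only be executed within the time interval $[\ell, r)$, whose total executable volume across $m$ machines is $m(r-\ell)$. Therefore $m(r-\ell) \ge \MC{J}(\ell, r)$, i.e.\ $m \ge \rho\BIGP{\MC{J},\ell,r}$. Because $m$ is a non-negative integer and the inequality holds for every pair $(\ell, r)$, the claim follows by taking ceilings.

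For the matching upper bound, set $m := \CEIL{\max_{0 \le \ell < r} \rho\BIGP{\MC{J},\ell,r}}$ and let $\MB{S}$ be the $m$-machine EDF schedule that, at every integer time $t$, executes up to $m$ currently-ready jobs of earliest deadlines and never idles a machine when the ready queue is non-empty. I would argue by contradiction that $\MB{S}$ is feasible. Assume some job $j = (a_j, b_j)$ misses its deadline in $\MB{S}$, chosen to have the earliest such deadline for definiteness. Let $\ell$ be the largest integer in $[0, a_j]$ such that either $\ell = 0$, or at time $\ell - 1$ at least one of the $m$ machines is idle, or at time $\ell - 1$ some machine executes a job whose deadline is strictly greater than $b_j$. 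By the maximality of $\ell$, throughout every slot $t \in [\ell, b_j)$ all $m$ machines are busy executing jobs of deadline $\le b_j$. Moreover, each such job must have arrival time at least $\ell$: were its arrival smaller, it would already be ready at time $\ell - 1$ and, by the greedy EDF rule, would have been selected there in place of either the idle slot or the strictly later-deadline job, contradicting the choice of $\ell$. Counting the failed job $j$ --- which belongs to $\MC{J}(\ell, b_j)$ but is not among the $m(b_j - \ell)$ jobs scheduled within $[\ell, b_j)$ --- we obtain $\MC{J}(\ell, b_j) \ge m(b_j - \ell) + 1$, hence $\rho\BIGP{\MC{J}, \ell, b_j} > m$, contradicting the definition of $m$.

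The main obstacle will be the trace-back step: calibrating the definition of $\ell$ so that it simultaneously forces continuous full occupancy of $[\ell, b_j)$ by deadline-$\le b_j$ jobs \emph{and} forces those jobs to have arrival times $\ge \ell$, together with cleanly handling the boundary cases ($\ell = 0$, where the predecessor time is vacuous, and several jobs failing simultaneously, which is resolved by picking $j$ of earliest missed deadline so that EDF-priority comparisons are unambiguous). Once the trace-back is correctly set up, the counting yields $\rho > m$ immediately, and Lemma~\ref{lemma-feasibility-characterization} is not strictly needed for this direction; as a sanity check one may alternatively deduce feasibility of $\MB{S}$ through Lemma~\ref{lemma-feasibility-characterization} by verifying its volume condition for every $(\ell, r)$.
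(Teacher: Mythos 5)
Your plan is correct, and for the upper bound it takes a genuinely different route from the paper. The paper constructs the same $m$-machine EDF schedule with $m=\CEIL{\max_{0\le\ell<r}\rho\BIGP{\MC{J},\ell,r}}$, but then concludes feasibility by checking the volume condition of Lemma~\ref{lemma-feasibility-characterization}, namely $\sum_{\ell\le t<r}\#_{\MB{S}'}(t)\ge(r-\ell)\CEIL{\rho\BIGP{\MC{J},\ell,r}}\ge\MC{J}(\ell,r)$, and invoking the ``if'' direction of that lemma (whose own proof is an iterated trace-back producing a descending chain of intervals). You instead prove feasibility directly with a single busy-window argument: locate the last slot before the missed deadline at which a machine is idle or runs a job of deadline exceeding $b_j$, conclude that every slot of $[\ell,b_j)$ is fully occupied by jobs of $\MC{J}(\ell,b_j)$, and count to get $\rho\BIGP{\MC{J},\ell,b_j}>m$. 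Your calibration of $\ell$ is right: the disjunction in its definition is exactly what forces both full occupancy on $[\ell,b_j)$ (using non-idling EDF for the slots at and after $a_j$) and arrival times at least $\ell$ for the jobs packed there, with $\ell=0$ handled vacuously. What your version buys is self-containedness and a one-shot contradiction; it also implicitly handles a point the paper glosses over, namely that an EDF schedule cannot literally ``use exactly $m$ machines at all times'' when the ready queue is short (your idle-slot case absorbs this). What the paper's version buys is brevity, since Lemma~\ref{lemma-feasibility-characterization} is already established and reused elsewhere. Your lower bound is the same capacity count as the paper's, merely stated directly rather than routed through the ``only if'' direction of that lemma.
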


\begin{proof}
By Lemma~\ref{lemma-feasibility-characterization}, we have $\sum_{\ell\le t<r}\#_\MB{S}(t) \ge \MC{J}(\ell,r)$, for any specific pair $(\ell, r)$, $0\le \ell < r$.
Dividing both side of the inequality by $(r-\ell)$, we get 
$$\frac{1}{r-\ell}\cdot\sum_{\ell\le t<r}\#_\MB{S}(t) \ge \frac{\MC{J}(\ell,r)}{r-\ell} = \rho\BIGP{\MC{J}, \ell, r}.$$
From the definition of $M(\MB{S})$, we also have $$M(\MB{S}) \ge \max_{\ell \le t < r}\#_\MB{S}(t) \ge \CEIL{\frac{1}{r-\ell}\cdot\sum_{\ell\le t<r}\#_\MB{S}(t)},$$
where the ceiling comes from the fact that the number of machines is an integer.
This implies that $M(\MB{S}) \ge \CEIL{\rho\BIGP{\MC{J},\ell,r}}$. Since this holds for all $0\le \ell < r$, we get $M(\MB{S}) \ge \CEIL{\max_{0 \le \ell < r}\rho\BIGP{\MC{J},\ell,r}}$, for any feasible schedule $\MB{S}$, including $OPT(\MC{J})$.

To see that the inequality becomes equality for $OPT(\MC{J})$. Consider the schedule $\MB{S}^\prime$ which follows the earliest-deadline-first principle and which uses exactly $\CEIL{\max_{0 \le \ell < r}\rho\BIGP{\MC{J},\ell,r}}$ machines at all time. We have
$$\sum_{\ell \le t < r}\#_{\MB{S}^\prime}(t) \ge \BIGP{r-\ell}\cdot\CEIL{\rho\BIGP{\MC{J},\ell,r}} \ge \MC{J}(\ell,r),$$
for all $0\le \ell < r$. This implies $\MB{S}^\prime$ is feasible by Lemma~\ref{lemma-feasibility-characterization}. Since $\MB{S}^\prime$ uses the minimum number of machines, it is one of the optimal schedule. This proves the lemma.
\qed
\end{proof}

\subsection{Online Job Scheduling} 

We consider the case where the jobs are arriving in an online setting, i.e., at any time $t$, we only know the job arrivals up to time $t$,
and the scheduling decisions have to be made without prior knowledge on future job arrivals.
To be more precise, let 
$\MC{J}(t) = \BIGBP{j \colon j\in \MC{J}, a_j \le t}$
be the subset of $\MC{J}$ which contains jobs 
that have arrived to the system up to time $t$. 
In the \emph{online machine-minimizing job scheduling} problem, we wish to find a feasible schedule for a given set of jobs $\MC{J}$ such that the number of machines required up to time $t$ is small with respect to $OPT(\MC{J}(t))$ for any $t\ge 0$.

\begin{definition}[Competitive Factor of an Online Algorithm~\cite{Borodin:1998:OCC:290169}]
An online algorithm $\Gamma$ is said to be $c$-competitive for an optimization problem $\Pi$ if for any instance $\MC{I}$ of $\Pi$, we have $\Gamma(\MC{I}) \le c\cdot Opt(\MC{I})+x$, where $\Gamma(\MC{I})$ and $OPT(\MC{I})$ are the values computed by $\Gamma$ and the optimal solution for $\MC{I}$, respectively, and $x$ is a constant. The \emph{asymptotic} competitive factor of $\Gamma$ is defined to be $$\limsup_{n\rightarrow\infty}\BIGBP{\frac{\Gamma(\MC{I})}{n} \: : \: \text{$\MC{I}$ is an instance of $\Pi$ such that $Opt(\MC{I}) = n$.}}.$$
\end{definition}

\subsection{Other Notations and Special Job Sets.}

Let $\MC{J}$ be a set of jobs. For any $t \ge 0$, we use $\hat{\rho}\BIGP{\MC{J},t}$ to denote the maximum density among those time intervals containing $t$ with respect to $\MC{J}$, i.e., 
$$\hat{\rho}\BIGP{\MC{J},t} = \max_{0\le \ell\le t<r} \rho\BIGP{\MC{J},\ell,r}.$$
Furthermore, we use $\mathrm{def}\BIGP{\MC{J},t} = \BIGCP{\ell\BIGP{\MC{J},t},r\BIGP{\MC{J},t}}$ to denote the specific time interval that achieves the maximum density in the defining domain of $\hat{\rho}\BIGP{\MC{J},t}$.
If there are more than one such an interval, $\mathrm{def}\BIGP{\MC{J},t}$ is defined to be the one with the smallest left-end.
We call $\mathrm{def}\BIGP{\MC{J},t}$ the \emph{defining interval} of $\hat{\rho}\BIGP{\MC{J},t}$.
%
%
In addition, we define $\hat\varrho\BIGP{\MC{J}} = \max_{t\ge 0}\hat\rho\BIGP{\MC{J},t}$ to denote the maximum density for a job set $\MC{J}$.
Notice that, by Lemma~\ref{lemma-opt-offline-density}, $\hat\varrho\BIGP{\MC{J}}$ is an alternative definition of $OPT(\MC{J})$.

For ease of presentation, throughout this paper, we use a pair $\BIGP{d,\sigma}$ to denote a special problem instance for which the jobs have a universal deadline $d$, where $\sigma = \BIGP{\sigma(0),\sigma(1),\sigma(2),\ldots,\sigma(d-1)}$ is a sequence of length $d$ such that $\sigma(t)$ is the number of jobs arriving at time $t$.
%
%
Note that, under this setting, every defining interval has a right-end $d$.

\subsection{Technical Helper Lemmas}

Below, we list technical lemmas that are used in the remaining content to help prove our main results. We recommend the readers to move on to the next section and come back for further reference when referred.

\begin{lemma}
\label{prop-density-advancing}
For any $p,q,r,s \in \MBB{R}^+\cup\BIGBP{0}$, $q,s > 0$, if 
$$\frac{p}{q} \ge \frac{r}{s}, \quad \text{then} \quad \frac{p+r}{q+s} \ge \frac{r}{s}.$$
\end{lemma}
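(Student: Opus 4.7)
The plan is to prove this standard mediant-type inequality by clearing denominators on both the hypothesis and the conclusion, so that they reduce to the same algebraic inequality. Since $q,s > 0$, all the multiplications I will perform preserve the direction of the inequality, so the manipulation is purely mechanical.

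First, I would multiply both sides of the hypothesis $p/q \ge r/s$ by the positive quantity $qs$ to obtain the equivalent inequality $ps \ge qr$. This is the only piece of information I am given, so every subsequent step must reduce to it.

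Next, I would look at the target inequality $(p+r)/(q+s) \ge r/s$. Since $q+s > 0$ (because $q,s > 0$) and $s > 0$, multiplying both sides by $s(q+s)$ yields the equivalent form $s(p+r) \ge r(q+s)$. Expanding both sides gives $ps + rs \ge rq + rs$, and cancelling the common term $rs$ leaves precisely $ps \ge rq$, which is exactly what the hypothesis provides. Reversing the chain of equivalences then gives the desired conclusion.

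There is essentially no obstacle here: the lemma is purely algebraic and the proof is a one-line cross-multiplication after cancellation. The only thing to be careful about is to explicitly invoke the positivity of $q$ and $s$ (and hence of $q+s$) when cross-multiplying, so that inequality directions are preserved. I would also briefly note the geometric intuition, namely that $(p+r)/(q+s)$ is a weighted mediant of $p/q$ and $r/s$ and therefore lies between them, which makes the conclusion intuitively obvious even before the calculation.
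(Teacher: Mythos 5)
Your proof is correct and is essentially the same argument as the paper's: both reduce the claim to $ps - qr \ge 0$ by clearing the (positive) denominators, the paper doing so by writing $\frac{p+r}{q+s}-\frac{r}{s}=\frac{ps-qr}{s(q+s)}$. No further comment is needed.
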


\begin{proof}
We have $$\frac{p+r}{q+s} - \frac{r}{s} = \frac{s(p+r) - r(q+s)}{s(q+s)} = \frac{ps-qr}{s(q+s)} \ge 0,$$
where the last inequality follows from the assumption $\frac{p}{q}\ge \frac{r}{s}$.
\qed
\end{proof}

\medskip

\begin{lemma}
\label{lemma-def-le-cut}
For any job set $\MC{J} = \BIGP{d,\sigma}$, any $t \ge 0$, and any $t^*$ with $\ell\BIGP{\MC{J},t} < t^* \le t$, we have $$\frac{\sum_{\ell\BIGP{\MC{J},t}\le i<t^*}\sigma(i)}{t^*-\ell\BIGP{\MC{J},t}} \ge \frac{\sum_{t^*\le i<d}\sigma(i)}{d-t^*}.$$
\end{lemma}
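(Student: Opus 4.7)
The plan is to exploit the special structure of a universal-deadline instance $\MC{J}=(d,\sigma)$: as the paper notes just before the lemma, every defining interval has right endpoint $d$, so $r(\MC{J},t)=d$ and the defining interval is $\BIGCP{\ell(\MC{J},t),d}$. Writing $L:=\ell(\MC{J},t)$ for brevity, this means
\[
\hat{\rho}(\MC{J},t) \;=\; \frac{\sum_{L\le i<d}\sigma(i)}{d-L}.
\]

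Next I would point out the key observation: the shorter interval $\BIGCP{t^*,d}$ also contains $t$, because $t^*\le t$ by hypothesis and $t<d$ (otherwise the defining interval would not contain $t$). Hence $\BIGCP{t^*,d}$ is one of the competitors in the definition of $\hat{\rho}(\MC{J},t)$, which forces
\[
\frac{\sum_{t^*\le i<d}\sigma(i)}{d-t^*} \;\le\; \frac{\sum_{L\le i<d}\sigma(i)}{d-L}.
\]

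Finally, I would split the right-hand side along $t^*$. Setting $A:=\sum_{L\le i<t^*}\sigma(i)$, $B:=\sum_{t^*\le i<d}\sigma(i)$, $x:=t^*-L$, and $y:=d-t^*$, the above inequality reads $B/y \le (A+B)/(x+y)$. Cross-multiplying and cancelling gives $Ay\ge Bx$, i.e., $A/x\ge B/y$, which is exactly the claimed inequality. (Equivalently, one can invoke Lemma~\ref{prop-density-advancing}: its proof actually establishes the equivalence $\BIGP{p+r}/\BIGP{q+s}\ge r/s \Leftrightarrow p/q\ge r/s$, so applying it in the reverse direction with $p=A$, $q=x$, $r=B$, $s=y$ yields the conclusion directly.)

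I do not anticipate a real obstacle here. The only non-routine step is recognizing that the universal-deadline assumption collapses every defining interval to a right endpoint of $d$; after that, the lemma reduces to a one-line algebraic manipulation that is essentially the biconditional content of Lemma~\ref{prop-density-advancing}.
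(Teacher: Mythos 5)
Your proof is correct and is essentially the paper's argument: both rest on the maximality of the defining interval $\BIGCP{\ell\BIGP{\MC{J},t},d}$ against the competitor $\BIGCP{t^*,d}$ (which also contains $t$), followed by the same cross-multiplication identity underlying Lemma~\ref{prop-density-advancing}. The only cosmetic difference is that the paper phrases the step as a proof by contradiction, whereas you argue directly from $\rho\BIGP{\MC{J},t^*,d}\le\hat\rho\BIGP{\MC{J},t}$.
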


\begin{proof}
For simplicity and ease of presentation, let 
\begin{align*}
\begin{cases}
p = \sum_{\ell\BIGP{\MC{J},t}\le i<t^*}\sigma(i), \\
q = t^*-\ell\BIGP{\MC{J},t},
\end{cases}
&
\begin{cases}
r = \sum_{t^*\le i<d}\sigma(i), \quad \text{and} \\
s = d-t^*.
\end{cases}
\end{align*}
In this lemma, we want to prove that $$\frac{p}{q} \ge \frac{r}{s}.$$
Assume for contradiction that $$\frac{p}{q} < \frac{r}{s},$$
which implies that $ps < qr$. Then we have 
\begin{align*}
\rho\BIGP{\MC{J},t^*,d} - \rho\BIGP{\MC{J},\ell(\MC{J},t),d} \: & = \: \frac{\sum_{t^*\le i<d}\sigma(i)}{d-t^*} - \frac{\sum_{\ell\BIGP{\MC{J},t}\le i<d}\sigma(i)}{d-\ell\BIGP{\MC{J},t}} \\
& = \: \frac{r}{s} - \frac{p+r}{q+s} \\
& = \: \frac{1}{s(q+s)}\cdot(qr+rs - ps-rs) \\
& = \: \frac{1}{s(q+s)}\cdot(qr-ps),
\end{align*}
which is strictly greater than zero by our assumption. This means that $\rho\BIGP{\MC{J},t^*,d} > \rho\BIGP{\MC{J},\ell(\MC{J},t),d}$, a contradiction to the fact that $\BIGP{\ell(\MC{J},t),d}$ is the defining interval of time $t$ for which the workload density is maximized. Therefore we have $\frac{p}{q} \ge \frac{r}{s}$.
\qed
\end{proof}

\medskip

\begin{figure*}[t]
\centering
{\includegraphics[scale=1]{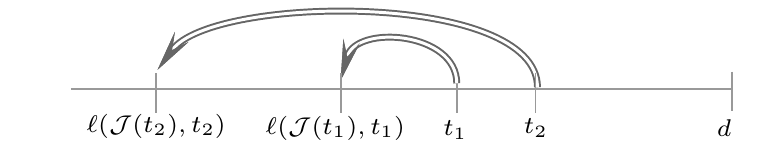}}
\caption{An illustration on the relative position of $t_1$ and $t_2$.}
\label{fig-def-le-nesting}
\end{figure*}

\begin{lemma}
\label{lemma-def-le-nesting}
For any job set $\MC{J} = \BIGP{d,\sigma}$, and any $t_1,t_2$ with $0\le t_1 < t_2 < d$, we have
$\ell\BIGP{\MC{J}(t_1),t_1} \le \ell\BIGP{\MC{J}(t_2),t_2}$.
\end{lemma}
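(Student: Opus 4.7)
\medskip

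\noindent\textbf{Proof proposal.}
The plan is to argue by contradiction. Write $\ell_1 := \ell\BIGP{\MC{J}(t_1),t_1}$ and $\ell_2 := \ell\BIGP{\MC{J}(t_2),t_2}$, assume $\ell_1 > \ell_2$, and work towards contradicting the fact that $\ell_2$ is the (smallest) defining left-end at $t_2$ in $\MC{J}(t_2)$. Because the instance has a universal deadline $d$, every defining interval has right-end $d$, and together with $\ell_1\le t_1$ this yields the nested ordering $\ell_2 < \ell_1 \le t_1 < t_2 < d$ depicted in Figure~\ref{fig-def-le-nesting}.

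Next I would partition the relevant arrivals into three disjoint blocks,
$$A \;=\; \sum_{\ell_1 \le i \le t_1}\sigma(i), \qquad B \;=\; \sum_{\ell_2 \le i < \ell_1}\sigma(i), \qquad C \;=\; \sum_{t_1 < i \le t_2}\sigma(i) \;\ge\; 0,$$
and set $\alpha = \ell_1 - \ell_2 > 0$ and $\beta = d - \ell_1 > 0$. In these variables the four densities of interest take the clean form $\rho\BIGP{\MC{J}(t_1), \ell_1, d} = A/\beta$, $\rho\BIGP{\MC{J}(t_1), \ell_2, d} = (A+B)/(\alpha+\beta)$, $\rho\BIGP{\MC{J}(t_2), \ell_1, d} = (A+C)/\beta$, and $\rho\BIGP{\MC{J}(t_2), \ell_2, d} = (A+B+C)/(\alpha+\beta)$, since in $\MC{J}(t_i)$ no arrivals occur after time $t_i$.

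The crux is to extract a single \emph{strict} inequality on the $\MC{J}(t_1)$-side. Since $\ell_2 < \ell_1$ and $[\ell_2,d]$ also contains $t_1$, the tie-breaking rule ``smallest left-end'' in the definition of $\mathrm{def}(\MC{J},t)$ forbids $[\ell_2,d]$ from achieving the maximum density at $t_1$ in $\MC{J}(t_1)$; hence $\rho\BIGP{\MC{J}(t_1), \ell_1, d} > \rho\BIGP{\MC{J}(t_1), \ell_2, d}$, which rearranges to $A\alpha > B\beta$. Using $C\ge 0$ this promotes to $(A+C)\alpha \ge A\alpha > B\beta$, which is exactly the rearrangement of $\rho\BIGP{\MC{J}(t_2), \ell_1, d} > \rho\BIGP{\MC{J}(t_2), \ell_2, d}$. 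Because $[\ell_1,d]$ contains $t_2$ as well, this shows $\ell_2$ cannot be the smallest left-end achieving the maximum density at $t_2$ in $\MC{J}(t_2)$, the desired contradiction.

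I expect the main subtlety to lie in the \emph{strictness} of the first density comparison. Without the tie-breaking convention baked into the definition of $\mathrm{def}(\MC{J},t)$ one would only obtain $A\alpha \ge B\beta$, which would degrade the final chain to a weak inequality compatible with $\ell_1$ and $\ell_2$ jointly realising the maximum, and the contradiction would fail to materialise; it is therefore essential to invoke the ``smallest left-end'' clause at exactly this point. Lemma~\ref{prop-density-advancing} supplies the underlying mediant-monotonicity identity used implicitly in the density rearrangements above.
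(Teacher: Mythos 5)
Your proof is correct, and it takes a mirror-image route to the paper's rather than reproducing it. Both arguments assume $\ell_1 > \ell_2$ for contradiction and both reduce to comparing the densities of $[\ell_1,d]$ and $[\ell_2,d]$ at the two times, but they run the implication chain in opposite directions. The paper starts on the $t_2$ side: it applies Lemma~\ref{lemma-def-le-cut} to $\MC{J}(t_2)$ with cut point $t^* = \ell_1$ to obtain, in your notation, $B/\alpha \ge (A+C)/\beta \ge A/\beta$, then uses the mediant monotonicity of Lemma~\ref{prop-density-advancing} to conclude $\rho\BIGP{\MC{J}(t_1),\ell_2,d} = (A+B)/(\alpha+\beta) \ge A/\beta = \hat\rho\BIGP{\MC{J}(t_1),t_1}$, and lands the contradiction at $t_1$ --- where, exactly as you anticipated, the resulting inequality is only non-strict and must be turned into a contradiction by the smallest-left-end tie-breaking convention. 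You instead invoke the tie-breaking rule at $t_1$ up front to extract the strict inequality $A\alpha > B\beta$, add the nonnegative block $C$ of arrivals in $(t_1,t_2]$ (which lies inside $[\ell_1,d]$ and hence can only favour that interval), and land the contradiction at $t_2$, where plain maximality of the defining interval suffices. Your version is self-contained algebra and dispenses with Lemma~\ref{lemma-def-le-cut} entirely, at the cost of handling the mediant rearrangements by hand; the paper's version recycles a lemma it needs elsewhere anyway. Your closing observation about where strictness must enter is the right diagnosis: some appeal to the tie-breaking clause is unavoidable in either direction, and the two proofs simply choose different endpoints at which to spend it.
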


\begin{proof}
Assume for contradiction that $\ell\BIGP{\MC{J}(t_1),t_1} > \ell\BIGP{\MC{J}(t_2),t_2}$. To get a more clear idea on the following arguments, also refer to Fig.~\ref{fig-def-le-nesting} for an illustration on the relative positions. By Lemma~\ref{lemma-def-le-cut}, with $t^*$ replaced by $\ell(\MC{J}(t_1),t_1)$, $\MC{J}$ replaced by $\MC{J}(t_2)$, and $t$ replaced by $t_2$, we get $$\frac{\sum_{\ell\BIGP{\MC{J}(t_2),t_2}\le i<\ell(\MC{J}(t_1),t_1)}\sigma(i)}{\ell(\MC{J}(t_1),t_1)-\ell\BIGP{\MC{J}(t_2),t_2}} \ge \frac{\sum_{\ell(\MC{J}(t_1),t_1)\le i\le t_2}\sigma(i)}{d-\ell(\MC{J}(t_1),t_1)},$$
whereas the latter term is at least 
$$\frac{\sum_{\ell(\MC{J}(t_1),t_1)\le i\le t_1}\sigma(i)}{d-\ell(\MC{J}(t_1),t_1)} = \hat\rho\BIGP{\MC{J}(t_1),t_1}.$$
By the above two inequalities and Lemma~\ref{prop-density-advancing}, we get
\begin{align*}
\frac{\sum_{\ell\BIGP{\MC{J}(t_2),t_2}\le i<\ell(\MC{J}(t_1),t_1)}\sigma(i)+\sum_{\ell(\MC{J}(t_1),t_1)\le i\le t_1}\sigma(i)}{\ell(\MC{J}(t_1),t_1)-\ell\BIGP{\MC{J}(t_2),t_2} + d-\ell(\MC{J}(t_1),t_1)} \ge \frac{\sum_{\ell(\MC{J}(t_1),t_1)\le i\le t_1}\sigma(i)}{d-\ell(\MC{J}(t_1),t_1)}.
\end{align*}
The left-hand side is exactly $\rho\BIGP{\MC{J}(t_1), \ell\BIGP{\MC{J}(t_2),t_2}, d}$.
Therefore, we get $$\rho\BIGP{\MC{J}(t_1), \ell\BIGP{\MC{J}(t_2),t_2}, d} \ge \hat\rho\BIGP{\MC{J}(t_1),t_1},$$ a contradiction to the fact that $\BIGP{\ell(\MC{J}(t_1),t_1),d}$ is the defining interval of $t_1$ with respect to $\MC{J}(t_1)$. Hence we must have $\ell\BIGP{\MC{J}(t_1),t_1} \le \ell\BIGP{\MC{J}(t_2),t_2}$.
\qed
\end{proof}



\section{Problem Complexity}

This section presents a lower bound of the competitive factor for the studied problem.  We consider a special case for which the jobs have a universal deadline, which will later serve as a basis to our main algorithm. In \S\ref{subsec-counter-example}, we show why the online algorithm, provided in~\cite{Shi:2008:OBP:1328331.1328337} for this special case and claimed to be optimally $2$-competitive, fails to produce feasible schedules. 
Built upon the idea behind the counter-example, we then prove a lower bound of $2.09$ for the competitive factor of any online algorithm in \S\ref{subsec-lower-bound}.
We begin with the following lemma, which draws up the curtain on the difficulty of this problem led by unknown job arrivals.

\begin{lemma}[\emph{$2$-competitivity lower bound~\bf{\cite{Shi:2008:OBP:1328331.1328337}}}]
Any online algorithm for the machine-minimizing job scheduling problem with unit jobs and a universal deadline has a competitive factor of at least $2$.
\end{lemma}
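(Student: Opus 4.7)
The bound follows from a simple adversarial construction with universal deadline $d = 2$. The plan is to consider the family of instances $I_y$ in which one job arrives at time $0$ and $y$ jobs arrive at time $1$, all with common deadline $d=2$. By Lemma~\ref{lemma-opt-offline-density}, the workload density is maximized on $[0,2]$, giving $\mathrm{OPT}(I_y) = \lceil (y+1)/2 \rceil$.

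The key observation is that an online algorithm $\mathrm{ALG}$ sees only the single time-$0$ job when choosing its load $x_0 \in \{0,1\}$ at time $0$; the value $y$ is still unknown to it. At time $1$, which is the last slot before the deadline, every remaining job must be executed, so $\mathrm{ALG}$'s load at time $1$ equals $1 - x_0 + y$. The peak machine count is therefore $\max(x_0,\, 1 - x_0 + y) \ge y$, regardless of the choice of $x_0$.

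Combining these bounds yields
$$\frac{\mathrm{ALG}(I_y)}{\mathrm{OPT}(I_y)} \;\ge\; \frac{y}{\lceil (y+1)/2 \rceil} \;\ge\; \frac{2y}{y+2} \;\longrightarrow\; 2 \quad \text{as } y \to \infty.$$
If $\mathrm{ALG}$ were $c$-competitive, the definition would give $\mathrm{ALG}(I_y) \le c \cdot \mathrm{OPT}(I_y) + x$ for some constant $x$; substituting and rearranging yields $y(1 - c/2) \le c + x$, which fails for sufficiently large $y$ whenever $c < 2$. Hence $c \ge 2$.

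The main technical obstacle is only the careful treatment of the ceiling and the additive constant in the competitive-factor definition. Because both $\mathrm{ALG}(I_y)$ and $\mathrm{OPT}(I_y)$ grow linearly in $y$, this is a routine limit calculation; the qualitative insight is that the one bit of information the algorithm commits to at time $0$ has no effect on the $y$ jobs it is forced to run at the final slot, so the adversary can make the ratio arbitrarily close to $2$ by enlarging $y$.
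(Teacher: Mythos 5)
Your construction does not work, because your computation of the offline optimum is wrong. In the instance $I_y$ the $y$ jobs that arrive at time $1$ have arrival time $1$ and deadline $2$, so each of them can only be executed in the single slot $[1,2)$; no schedule, online or offline, can move any of them to time $0$. Formally, Lemma~\ref{lemma-opt-offline-density} takes the maximum density over \emph{all} intervals, and the interval $[1,2]$ has density $\MC{J}(1,2)/(2-1)=y$, which dominates the density $(y+1)/2$ of $[0,2]$ that you used. Hence $\mathrm{OPT}(I_y)=y$ for $y\ge 1$, and the ratio $\mathrm{ALG}(I_y)/\mathrm{OPT}(I_y)\le (y+1)/y\to 1$. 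Your instance certifies no lower bound at all: the very feature you exploit to force the algorithm's peak up to $y$ (all remaining jobs must run in the last slot) forces the optimum up to $y$ as well. This cannot be repaired within a two-slot horizon.

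The actual argument needs many time steps so that $\mathrm{OPT}(\MC{J}(t))$ grows slowly while the total workload accumulates. The paper releases $d$ jobs at each of the times $0,1,\dots,d-1$, all with deadline $d$, so that $\mathrm{OPT}(\MC{J}^*_2(t))=t+1$. A $R$-competitive algorithm may run at most $R(t+1)$ jobs at time $t$, hence at most $\sum_{0\le t<d}R(t+1)=\tfrac{1}{2}d(d+1)R$ jobs in total, yet $d^2$ jobs must be completed by time $d$; feasibility forces $R\ge 2d/(d+1)\to 2$. The essential idea you are missing is this integration of the per-time budget $R\cdot\mathrm{OPT}(\MC{J}(t))$ over a long horizon against the total workload, rather than a comparison of peaks at a single time step.
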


\begin{proof}
Below we sketch the proof provided in~\cite{Shi:2008:OBP:1328331.1328337}. Let $d \ge 1$ be an arbitrary integer. Consider the job set $\MC{J}^*_2 = \BIGP{d, \sigma^*_2}$, where $\sigma^*_2 = \BIGP{d,d,d,\ldots,d}$ is a sequence containing $d$ elements of value $d$. 
Notice that, we have $\mathrm{OPT}\BIGP{\MC{J}^*_2(t)} = t+1$ for all $0\le t<d$. Therefore any online algorithm with competitive factor $R$ uses at most $R\cdot\BIGP{t+1}$ machines at time $t$. Taking the summation over $0\le t<d$, we know that any online algorithm with competitive factor $R$ can schedule at most $\sum_{0\le t<d}R\cdot\BIGP{t+1} = \frac{1}{2}d(d+1)R$ jobs. Since there are $d^2$ jobs in total, we conclude that $R\ge 2$ when $d$ goes to infinity.
\qed
\end{proof}

\subsection{Why the Known Algorithm Fails to Produce Feasible Schedules}
\label{subsec-counter-example}

This section presents a counter-example for the online algorithm provided in~\cite{Shi:2008:OBP:1328331.1328337}, which we will also refer to as the \emph{packing-via-density} algorithm in the following. Given a problem instance $\MC{J} = \BIGP{d,\sigma}$, \emph{packing-via-density} works as follows.

\smallskip

At any time $t$, $t \ge 0$, the algorithm computes the maximum density with respect to the current job set $\MC{J}(t)$. More precisely, it computes $\hat{\rho}\BIGP{\MC{J}(t),t}$. Then the algorithm assigns $2\cdot\CEIL{\hat{\rho}\BIGP{\MC{J}(t),t}}$ jobs for execution.

\smallskip

Intuitively, in the computation of density, the workload of each job is equally distributed to the time interval from its arrival till its deadline, or, possibly to a larger super time interval containing it if this leads to a higher density. 
The algorithm uses another factor of $\hat{\rho}\BIGP{\MC{J}(t),t}$ in order to cover the unknown future job arrivals, which seems to be a good direction for getting a feasible scheduling.

%

However, in~\cite{Shi:2008:OBP:1328331.1328337}, the authors claimed that the job set $\MC{J}^*_2$ represents one of the worst case scenarios, followed by sketching the feasibility of \emph{packing-via-density} on $\MC{J}^*_2$.
Although from intuition this looks promising, and, by suitably defining the potential function, one can indeed prove the feasibility of \emph{packing-via-density} for $\MC{J}^*_2$ and other similar job sets, the job set $\MC{J}^*_2$ they considered is in fact not a worst scenario. The main reason is that, for each $t$ with $0\le t < d$, the left-end of the defining interval for $t$ is always zero. That is, we have $\mathrm{def}\BIGP{\MC{J}^*_2(t),t} = \BIGCP{0,d}$ for all $0\le t<d$. 
This implicitly takes all job arrivals into consideration when computing the densities.
When the sequence is more complicated and the defining intervals change over time, using $2\cdot\CEIL{\hat{\rho}\BIGP{\MC{J}(t),t}}$ machines is no longer able to cover the \emph{unpaid debt} created before $\ell\BIGP{\MC{J}(t),t}$, i.e., the jobs which are not yet finished but no longer contributing to the computation of $\hat\varrho\BIGP{\MC{J}(t)}$.
This is illustrated by the following example.

Consider the job set $\MC{J}^* = \BIGP{32, \sigma^*}$, where $\sigma^*$
is defined as
$$\sigma^* = \BIGP{\underbrace{75,75,\ldots,75}_{0 \sim 15},\: 1200,\: 0,0,0,\: \underbrace{300,300,\ldots,300}_{20 \sim 31}}$$
Notice that, for $0\le t\le 19$, we have $\mathrm{def}\BIGP{\MC{J}^*(t),t} = \BIGCP{0,32}$, and for $20\le t\le 31$, we have $\mathrm{def}\BIGP{\MC{J}^*(t),t} = \BIGCP{16,32}$.
%
%
%

\begin{theorem}
Using algorithm \emph{packing-via-density} on the job set $\MC{J}^*$ results in deadline misses of $10$ jobs.
\end{theorem}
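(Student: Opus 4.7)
The plan is to directly simulate the packing-via-density algorithm on $\MC{J}^*$ and count the jobs it leaves undispatched by time $d = 32$. Since the total workload is $16 \cdot 75 + 1200 + 12 \cdot 300 = 6000$, the theorem reduces to showing that the algorithm schedules exactly $5990$ jobs over the horizon.

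First I would verify the two defining-interval claims stated just before the theorem: $\mathrm{def}(\MC{J}^*(t), t) = [0, 32]$ for $0 \le t \le 19$ and $\mathrm{def}(\MC{J}^*(t), t) = [16, 32]$ for $20 \le t \le 31$. Each phase reduces to a one-variable comparison of $\rho(\MC{J}^*(t), \ell, 32)$ over $\ell \in \{0, 1, \ldots, t\}$. A representative borderline case is $t = 16$, where the identity $(75(16 - \ell) + 1200)/(32 - \ell) = 75$ holds for every $\ell \in [0, 16]$, so the tie-breaking rule (smallest left-end) fixes the defining interval to $[0, 32]$. This step yields the closed-form values $\hat\rho(\MC{J}^*(t), t) = 75(t+1)/32$ on $[0, 15]$, constant $75$ on $[16, 19]$, and $(1200 + 300(t - 19))/16$ on $[20, 31]$.

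With these values in hand, I would tabulate the per-step dispatch count $n_t := 2\lceil \hat\rho(\MC{J}^*(t), t)\rceil$ phase by phase. Summing across the three phases produces the total number of jobs the algorithm attempts to execute; in particular the third phase requires careful ceiling arithmetic since the argument has residues in $\{0, .25, .5, .75\}$ modulo $1$, and the three partial sums add to $5990$. In parallel I would track the ready-queue size, both to certify that each $n_t$ is actually achievable (so the queue never runs dry) and to make the failure mechanism visible: the $1200$-burst at $t = 16$ loads the queue beyond what four steps of $n_t = 150$ can drain, and once the defining interval jumps to $[16, 32]$ at $t = 20$, the surviving $75$-job backlog drops out of $\hat\rho$ entirely, leaving the factor-$2$ margin insufficient to catch up.

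The calculation is purely arithmetic, so the only real obstacle is careful bookkeeping across $32$ steps. The conceptual punchline I would foreground is that the $10$-job deficit is precisely the \emph{unpaid debt} inherited from the defining-interval jump at $t = 20$: a constant safety factor over $\hat\rho$ is inherently blind to any backlog produced before the current defining interval began, and no amount of re-tuning that factor (short of allowing it to grow with the history) can repair this instance.
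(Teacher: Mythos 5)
Your proposal is correct and follows essentially the same route as the paper: compute $\hat\rho\BIGP{\MC{J}^*(t),t}$ in closed form phase by phase (with the defining interval $[0,32]$ for $t\le 19$ and $[16,32]$ for $t\ge 20$), sum $2\CEIL{\hat\rho\BIGP{\MC{J}^*(t),t}}$ to get $5990$, and compare against the total workload of $6000$. Your additional check that the ready queue never runs dry (so the algorithm really dispatches the full $2\CEIL{\cdot}$ at each step) is a small point the paper leaves implicit, but otherwise the two arguments coincide.
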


\begin{proof}
We prove by calculating the density for each moment explicitly.

\begin{itemize}
\item
For $0\le t \le 15$, we have $$\rho\BIGP{\MC{J}^*(t), t, 32} = \frac{75}{32-t}, $$
which is strictly smaller than $\sigma^*(\ell) = 75$, for all $0\le \ell \le t$.
Therefore, by Lemma~\ref{prop-density-advancing}, decreasing the value of $\ell$ in $\rho\BIGP{\MC{J}^*(t), \ell, 32}$ always increases this density, and $\rho\BIGP{\MC{J}^*(t), \ell, 32}$ is maximized at $\ell = 0$ with the value $$\rho\BIGP{\MC{J}^*(t), 0, 32} = \frac{75}{32}\BIGP{t+1}.$$

For $0 \le t \le 15$, the values of $\CEIL{\hat\rho\BIGP{\MC{J}^*(t),t}}$ are
$$3, \: 5, \: 8, \: 10, \: 12, \: 15, \: 17, \: 19, \: 22, \: 24, \: 26, \: 29, \: 31, \: 33, \: 36, \: 38,$$
respectively, and $\sum_{0\le t\le 15}2\cdot\CEIL{\hat\rho\BIGP{\MC{J}^*(t),t}} = 656$.

\medskip

\item
For $t = 16$, consider each $0\le \ell \le 15$, we have
$$\rho\BIGP{\MC{J}^*(t), 16, 32} = \frac{1200}{16} = 75 = \sigma^*(\ell).$$
Therefore the function $\rho\BIGP{\MC{J}^*(t), \ell, 32}$ is again maximized at $\ell = 0$ by Lemma~\ref{prop-density-advancing}, and we have $2\cdot\CEIL{\hat\rho\BIGP{\MC{J}^*(16),16}} = 150$.

\medskip

\item
For $17 \le t \le 19$, consider each $17 \le \ell \le 19$, we have $\rho\BIGP{\MC{J}^*(t),\ell,32} = 0$.
For $0\le \ell \le 16$, the situation is identical to the case when $t = 16$. Therefore we have
$\sum_{17 \le t \le 19}2\cdot\CEIL{\hat\rho\BIGP{\MC{J}^*(t),t}} = 3\cdot 150 = 450$.

\medskip

\item
For $20 \le t \le 31$, we have $$\rho\BIGP{\MC{J}^*(t), t, 32} = \frac{300}{32-t}, $$
which is smaller or equal to $\sigma^*(\ell) = 300$, for all $20 \le \ell \le t$. Since we also have $\frac{1}{4}\cdot\sum_{16\le \ell \le 19}\sigma^*(\ell) = 300$, 
by Lemma~\ref{prop-density-advancing}, we get
$$\rho\BIGP{\MC{J}^*(t), \ell, 32} \le \rho\BIGP{\MC{J}^*(t), 16, 31} = \frac{1}{16}\cdot\BIGP{1200 + 300\cdot(t-19)}.$$
Since we have $$\frac{1}{16}\cdot\BIGP{1200 + 300\cdot(t-19)} > 75 = \sigma^*(\ell), \quad \text{for all $0\le \ell \le 15$,}$$
by Lemma~\ref{lemma-def-le-cut}, we know that $\ell\BIGP{\MC{J}^*(t),t} > 15$. Therefore we have
$$\hat\rho\BIGP{\MC{J}^*(t),t} = \rho\BIGP{\MC{J}^*(t), 16, 31} = \frac{1}{16}\cdot\BIGP{1200 + 300\cdot(t-19)}.$$
For $20\le t\le 31$, the values of $\CEIL{\hat\rho\BIGP{\MC{J}^*(t),t}}$ are
$$94, \: 113, \: 132, \: 150, \: 169, \: 188, \: 207, \: 225, \: 244, \: 263, \: 282, \: 300,$$
respectively, and $\sum_{20\le t\le 31}2\cdot\CEIL{\hat\rho\BIGP{\MC{J}^*(t),t}} = 4734$.
\end{itemize}

\bigskip

By the above discussion, we conclude that $$\sum_{0\le t\le 31}2\cdot\CEIL{\hat\rho\BIGP{\MC{J}^*(t),t}} = 656 + 150 + 450 + 4734 = 5990.$$
Since the total number of jobs is $75\times 16 + 1200 + 300 \times 12 = 6000$, we have exactly $10$ jobs which fail to finish their execution by their deadlines.
\qed
\end{proof}

\subsection{Lower Bound on the Competitive Factor}
\label{subsec-lower-bound}

In fact, by further generalizing the construction of $\MC{J}^*$, we can design an \emph{online adversary} that proves a lower bound strictly greater than $2$ for the competitive factor of any online algorithm.
In the job set $\MC{J}^*$, the debt is created by making a one-time change of the defining interval at time $20$. Below, we construct an example whose defining intervals can alter for arbitrarily many times, thereby creating sufficiently large debts. Then, we present our online adversary.

\begin{figure*}[t]
{\includegraphics[scale=1.2]{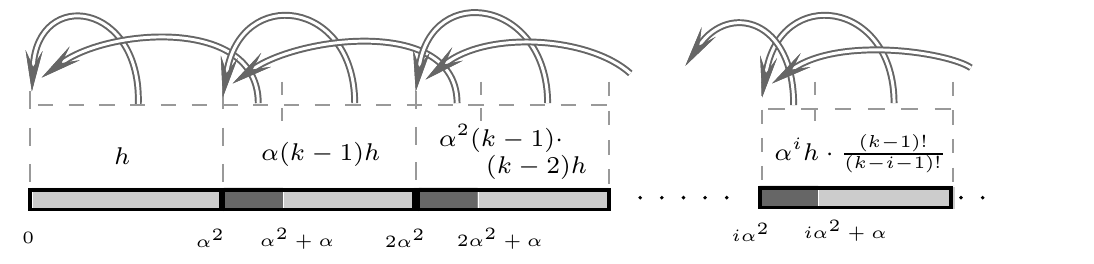}}
\caption{An illustration for the arrival sequence $\sigma^*_{k,\alpha}$ of $\MC{J}^*_{k,\alpha}$ and the changes of the left-ends of defining intervals over time.}
\label{fig-lower-bound-seq}
\end{figure*}

Let $h,k,\alpha \in \MBB{N}$ be three constants to be decided later. We define the job set $\MC{J}^*_{k,\alpha} = \BIGP{k\alpha^2, \sigma^*_{k,\alpha}}$ as follows.
For each $0\le i < k$ and each $0\le j < \alpha^2$,
$$\sigma^*_{k,\alpha}\BIGP{i\alpha^2+j} = \begin{cases}
h, & \text{if} \enskip i = 0, \\
\alpha\BIGP{k-i}\cdot \sigma^*_{k,\alpha}\BIGP{(i-1)\alpha^2+j}, & \text{otherwise.}
\end{cases}$$
%
Also refer to Fig.~\ref{fig-lower-bound-seq} for an illustration of the sequence.
The following lemma shows the changes of the defining intervals over time.

\begin{lemma}
\label{lemma-def-le}
For each time
$t = i\alpha^2 + j$, where $0\le i<k$, $0\le j<\alpha^2$, we have
$$\ell\BIGP{\MC{J}^*_{k,\alpha}(t), t} = \begin{cases}
0, & \text{for} \enskip i = 0, \\ 
\BIGP{i-1}\alpha^2, \quad & \text{for} \enskip i > 0 \enskip \text{and} \enskip 0\le j<\alpha, \\
i\alpha^2, & \text{for} \enskip i > 0 \enskip \text{and} \enskip \alpha\le j < \alpha^2.
\end{cases}$$
\end{lemma}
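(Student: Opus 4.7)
The plan hinges on exploiting the fact that the arrival sequence is constant within each length-$\alpha^2$ block: with $c_m := \sigma^*_{k,\alpha}(m\alpha^2)$, the defining recurrence gives $c_0 = h$, $c_m = \alpha(k-m)\,c_{m-1}$, and $\sigma^*_{k,\alpha}$ takes the value $c_m$ throughout block $m$. Since every job shares the deadline $d = k\alpha^2$, every defining interval has right-end $d$, so the problem reduces to maximizing the single-variable function $f(\ell) := \rho\BIGP{\MC{J}^*_{k,\alpha}(t),\,\ell,\,k\alpha^2}$ over $0 \le \ell \le t$, with ties broken toward the smaller $\ell$ by the convention on $\mathrm{def}\BIGP{\MC{J},t}$.

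First I would observe that, because the rate $c_m$ is constant within block $m$, decreasing $\ell$ by one inside block $m$ gives $f(\ell-1) = (c_m + (d-\ell)f(\ell))/(d-\ell+1)$, so by Lemma~\ref{prop-density-advancing} we have $f(\ell-1) \ge f(\ell)$ iff $c_m \ge f(\ell)$, and moreover $f(\ell-1)$ always lies between $f(\ell)$ and $c_m$. Hence the sign of $c_m - f(\cdot)$ is preserved as $\ell$ moves within a block, $f$ is monotone on each block, and its global maximum is attained at some boundary $\ell = m\alpha^2$ with $0 \le m \le i$. For $t = i\alpha^2 + j$, direct counting then yields
\begin{equation*}
f(i\alpha^2) \;=\; \frac{(j+1)\,c_{i-1}}{\alpha}
\quad\text{and}\quad
f(m\alpha^2) \;=\; \frac{c_m + (k-m-1)\,f\BIGP{(m+1)\alpha^2}}{k-m} \text{ for } m<i,
\end{equation*}
so each $f(m\alpha^2)$ is a weighted average of $c_m$ and $f((m+1)\alpha^2)$, and a second application of Lemma~\ref{prop-density-advancing} gives $f(m\alpha^2) \ge f((m+1)\alpha^2) \iff c_m \ge f((m+1)\alpha^2)$.

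The key comparison is at $m = i-1$: the inequality $c_{i-1} \ge f(i\alpha^2) = (j+1)\,c_{i-1}/\alpha$ simplifies to $j+1 \le \alpha$, giving exactly the split between the two sub-cases of the lemma. To rule out boundaries farther back when $i \ge 2$, I would expand $f((i-1)\alpha^2) = c_{i-2}\BIGCP{\alpha + (j+1)(k-i)}$ by substituting $c_{i-1} = \alpha(k-i+1)\,c_{i-2}$; since $i < k$, $j \ge 0$, and $\alpha \ge 1$, this strictly exceeds $c_{i-2}$, hence $f((i-2)\alpha^2) < f((i-1)\alpha^2)$, and a short induction on decreasing $m$ propagates the inequality to give $f(m\alpha^2) < f((i-1)\alpha^2)$ for every $0 \le m \le i-2$. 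The base case $i = 0$ is immediate: all relevant $\ell$ lie in block $0$ where $c_0 = h > (j-\ell+1)h/(k\alpha^2-\ell) = f(\ell)$ for every $0 \le \ell \le j < \alpha^2$, so the monotonicity above forces $\ell^* = 0$.

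The main obstacle I anticipate is purely combinatorial bookkeeping --- unfolding $c_i = \alpha(k-i)c_{i-1}$ repeatedly and chaining the weighted-average comparisons through Lemma~\ref{prop-density-advancing} without sign errors. The only genuinely subtle step is the boundary case $j = \alpha-1$: there $f((i-1)\alpha^2) = f(i\alpha^2)$, and only the ``smallest left-end'' convention in the definition of $\mathrm{def}\BIGP{\MC{J},t}$ places $\ell\BIGP{\MC{J}^*_{k,\alpha}(t),t}$ at $(i-1)\alpha^2$ rather than $i\alpha^2$, so this convention must be cited explicitly to match the statement.
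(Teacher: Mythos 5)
Your proof is correct, and it meets the paper's argument at its two pivotal computations --- the within-block monotonicity of $\rho(\cdot,\ell,k\alpha^2)$ obtained from Lemma~\ref{prop-density-advancing}, and the comparison of $c_{i-1}$ against $f(i\alpha^2)=(j+1)c_{i-1}/\alpha$ that yields the threshold $j<\alpha$ --- but you close the lower-bound half differently. The paper pins down the left end from below with two structural helper lemmas: Lemma~\ref{lemma-def-le-cut} (the prefix of a defining interval is at least as dense as its suffix) excludes $\ell<i\alpha^2$ when $j\ge\alpha$, and Lemma~\ref{lemma-def-le-nesting} (left ends of defining intervals are monotone in time) imports the equality already established at the earlier time $(i-1)\alpha^2+\alpha$ to force $\ell\ge(i-1)\alpha^2$ when $j<\alpha$. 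You instead evaluate the boundary densities $f(m\alpha^2)$ explicitly, note that each is a weighted average of $c_m$ and $f((m+1)\alpha^2)$, verify $f((i-1)\alpha^2)=c_{i-2}\bigl(\alpha+(j+1)(k-i)\bigr)>c_{i-2}$, and run a downward induction over block boundaries. Your route is more self-contained --- it never needs the nesting lemma --- at the cost of more bookkeeping; the paper's route is shorter here but relies on machinery it develops anyway. Two details to make explicit in a full write-up: first, when you claim the global maximum sits at a block boundary, spell out that an interior maximizer would force one of the two boundary values of that block to be at least as large (your observation that $f(\ell-1)$ lies between $f(\ell)$ and $c_m$ gives exactly this); second, in the degenerate instance $k=1$, $j=\alpha^2-1$ your base-case inequality $c_0>f(\ell)$ degenerates to equality for every $\ell$, so the smallest-left-end convention is needed there as well, just as you already invoke it at $j=\alpha-1$.
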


\begin{proof}
Depending on the values of $i$ and $j$, we consider the following cases to obtain lower-bounds and upper-bounds on the left-ends of the defining intervals.
\begin{itemize}
\item[Case (1):]
Consider each $i$ and $j$ with $0 \le i<k$ and $0\le j< \alpha^2$, and any $j^\prime$ with $0 \le j^\prime \le j$.
The density of the time interval $\BIGCP{i\alpha^2+j^\prime, k\alpha^2}$ with respect to job set $\MC{J}^*_{k,\alpha}(i\alpha^2+j)$ is 
\begin{equation}
\rho\BIGP{\MC{J}^*_{k,\alpha}(i\alpha^2+j), i\alpha^2+j^\prime, k\alpha^2} = \frac{j-j^\prime+1}{(k-i)\alpha^2-j^\prime}\cdot\sigma^*_{k,\alpha}(i\alpha^2+j).
\label{eq-lower-bound-le-1}
\end{equation}
Since $i < k$ and $j^\prime \le j < \alpha^2$, we know that 
$$\frac{(j+1)-j^\prime}{(k-i)\alpha^2-j^\prime} \le \frac{\alpha^2-j^\prime}{\alpha^2-j^\prime} = 1.$$
Combining with Eq.~(\ref{eq-lower-bound-le-1}), we get
$\rho\BIGP{\MC{J}^*_{k,\alpha}(i\alpha^2+j), i\alpha^2+j^\prime, k\alpha^2} \le \sigma^*_{k,\alpha}(i\alpha^2+j) = \sigma^*_{k,\alpha}(i\alpha^2+j^\prime)$. This holds for all $j^\prime$ with $0\le j^\prime\le j$.
Therefore, by Prop.~\ref{prop-density-advancing}, we know that 
$$\rho\BIGP{\MC{J}^*_{k,\alpha}(i\alpha^2+j), i\alpha^2, k\alpha^2} \ge \rho\BIGP{\MC{J}^*_{k,\alpha}(i\alpha^2+j), i\alpha^2+j^\prime, k\alpha^2}.$$ 
Hence, we get
\begin{equation}
\ell\BIGP{\MC{J}^*_{k,\alpha}(i\alpha^2+j),i\alpha^2+j} \le i\alpha^2.
\label{ieq-le-1}
\end{equation}
This shows that $\ell\BIGP{\MC{J}^*_{k,\alpha}(j),j} = 0$ for each $0\le j < \alpha^2$.

\smallskip

\item[Case (2):]
Next, consider each $i$ and $j$ with $0 < i < k$ and $\alpha \le j < \alpha^2$.
We have
\begin{align*}
\rho\BIGP{\MC{J}^*_{k,\alpha}(i\alpha^2+j), i\alpha^2, k\alpha^2} \: & \ge \: \rho\BIGP{\MC{J}^*_{k,\alpha}(i\alpha^2+\alpha), i\alpha^2, k\alpha^2} \\
& = \: \frac{\alpha+1}{(k-i)\alpha^2}\cdot\sigma^*_{k,\alpha}(i\alpha^2) \\
& > \: \frac{1}{(k-i)\alpha}\cdot\sigma^*_{k,\alpha}(i\alpha^2) \: = \: \sigma^*_{k,\alpha}(i\alpha^2-1).
\end{align*}
Since we have $\sigma^*_{k,\alpha}(i^\prime\alpha^2+j^\prime) \le \sigma^*_{k,\alpha}(i\alpha^2-1)$ for all $0\le i^\prime < i$ and $0\le j^\prime < \alpha^2$ by our design, together with Lemma~\ref{lemma-def-le-cut}, we get $\ell\BIGP{\MC{J}^*_{k,\alpha}(i\alpha^2+j),i\alpha^2+j} > i\alpha^2-1$.
Combining with Ineq.~(\ref{ieq-le-1}) from case (1), we get
\begin{equation}
\ell\BIGP{\MC{J}^*_{k,\alpha}(i\alpha^2+j),i\alpha^2+j} = i\alpha^2.
\label{eq-le-2}
\end{equation}

\smallskip

\item[Case (3):]
Now, consider each $i$ and $j$ with $0 < i < k$ and $0\le j<\alpha$, from Eq.~(\ref{eq-lower-bound-le-1}) we know that
\begin{align*}
\rho\BIGP{\MC{J}^*_{k,\alpha}(i\alpha^2+j), i\alpha^2,k\alpha^2} \: & = \: \frac{j+1}{(k-i)\alpha^2}\cdot\sigma^*_{k,\alpha}(i\alpha^2+j) \\
& \le \: \frac{1}{(k-i)\alpha}\cdot\sigma^*_{k,\alpha}(i\alpha^2+j) \: = \: \sigma^*_{k,\alpha}(i\alpha^2-1).
\end{align*}
This shows that, for this case, $\ell\BIGP{\MC{J}^*_{k,\alpha}(i\alpha^2+j), i\alpha^2+j} \le (i-1)\alpha^2$. By Eq.~(\ref{eq-le-2}) in case (2) and Lemma~\ref{lemma-def-le-nesting}, we get $\ell\BIGP{\MC{J}^*_{k,\alpha}(i\alpha^2+j), i\alpha^2+j} = (i-1)\alpha^2$.
\end{itemize}
This proves the lemma. 
\qed
\end{proof}

Below we present our online adversary, which we denote by $\MC{A}^*(c)$, where $c>0$ is a constant. Let $\Gamma$ be an arbitrary feasible online scheduling algorithm for this problem. The adversary works as follows.
At each time $t$ with $0\le t < k\alpha^2$, $\MC{A}^*(c)$ releases $\sigma^*_{k,\alpha}(t)$ jobs with deadline $k\alpha^2$ for algorithm $\Gamma$ and observes the behavior of $\Gamma$. If $\Gamma$ uses more than $c\cdot\hat\varrho\BIGP{\MC{J}^*_{k,\alpha}(t)}$ machines, then $\MC{A}^*(c)$ terminates immediately. Otherwise, $\MC{A}^*(c)$ proceeds to time $t+1$ and repeats the same procedure. This process continues till time $k\alpha^2$.
%

%

\begin{theorem}
Any feasible online algorithm for the machine-minimizing job scheduling problem has a competitive factor at least $2.09$, even for the case when the jobs have a universal deadline. 
\end{theorem}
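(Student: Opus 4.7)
The plan is to run the adversary $\MC{A}^*(c)$ against any feasible online algorithm $\Gamma$, with $\alpha$ fixed (to be chosen later) and $k\to\infty$, obtaining a contradiction whenever $c<2.09$. The built-in dichotomy of $\MC{A}^*(c)$ yields the key inequality: either at some $t$ we have $\#_\Gamma(t)>c\cdot\hat\varrho(\MC{J}^*_{k,\alpha}(t))$, in which case $\Gamma$ already violates the $c$-competitive bound asymptotically (since $OPT=\lceil\hat\varrho\rceil$ and $\hat\varrho\to\infty$ in the construction, absorbing the additive slack in the definition of competitivity); or the adversary runs to completion with $\#_\Gamma(t)\le c\cdot\hat\varrho(\MC{J}^*_{k,\alpha}(t))$ at every $t$, and summation gives $\sum_t\#_\Gamma(t)\le c\cdot T_\Gamma$ where $T_\Gamma:=\sum_{t=0}^{k\alpha^2-1}\hat\varrho(\MC{J}^*_{k,\alpha}(t))$. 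Feasibility of $\Gamma$ forces $\sum_t\#_\Gamma(t)\ge W$, the total number of jobs released, so $c\ge W/T_\Gamma$; it therefore suffices to exhibit parameters making $W/T_\Gamma>2.09$.

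Write $n_i:=\sigma^*_{k,\alpha}(i\alpha^2)$, so $n_0=h$, $n_{i+1}=\alpha(k-i-1)n_i$, and block $i$ contains $\alpha^2 n_i$ jobs. Lemma~\ref{lemma-def-le} pins down the density at time $i\alpha^2+j$ in three regimes: it equals $(j+1)h/(k\alpha^2)$ when $i=0$; it equals $[\alpha^2 n_{i-1}+(j+1)n_i]/[(k-i+1)\alpha^2]$ when $i\ge 1$ and $0\le j<\alpha$, where the defining interval still reaches back into block $i-1$; and it equals $(j+1)n_i/[(k-i)\alpha^2]$ when $i\ge 1$ and $\alpha\le j<\alpha^2$, where the defining interval has moved past block $i-1$, leaving behind the unpaid debt. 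Summing over $j$ within each block and using the recursion $\alpha n_{i-1}=n_i/(k-i)$ yields
\begin{equation*}
T_\Gamma \;=\; \frac{h(\alpha^2+1)}{2k} \,+\, \sum_{i=1}^{k-1} n_i\!\left[\frac{1}{(k-i)(k-i+1)} + \frac{\alpha+1}{2\alpha(k-i+1)} + \frac{\alpha^3-1}{2\alpha(k-i)}\right],
\end{equation*}
while $W=\alpha^2\sum_{i=0}^{k-1} n_i$.

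The final step is asymptotic evaluation as $k\to\infty$. Unfolding the recursion gives $n_{k-m}=n_{k-1}/[\alpha^{m-1}(m-1)!]$, and substituting $m=k-i$ collapses both expressions into truncated series for $e^{1/\alpha}$:
\begin{equation*}
\frac{W}{n_{k-1}} \;\longrightarrow\; \alpha^2 e^{1/\alpha}, \qquad \frac{T_\Gamma}{n_{k-1}} \;\longrightarrow\; \sum_{m=1}^{\infty}\frac{1}{\alpha^{m-1}(m-1)!}\!\left[\frac{1}{m(m+1)} + \frac{\alpha+1}{2\alpha(m+1)} + \frac{\alpha^3-1}{2\alpha m}\right].
\end{equation*}
Evaluating at $\alpha=5$ (which turns out to be near-optimal among integer choices) yields $W/T_\Gamma\to 25\,e^{1/5}/14.605\ldots\approx 2.0907>2.09$. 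The main subtlety is that the super-exponential growth $n_{i+1}/n_i=\alpha(k-i)$ concentrates the tail of the $T_\Gamma$-series near small $m$, so replacing the bracketed expression by its $m\to\infty$ limit $(\alpha^2+1)/2$ would be too loose and produce a misleadingly higher ratio; the exact coefficients at small $m$ must be retained to certify the $2.09$ bound, and one must separately verify that the block-$0$ contribution $h(\alpha^2+1)/(2k)$, the $\lceil\cdot\rceil-\hat\varrho$ gap, and the additive $+x$ slack are all dominated by $n_{k-1}\sim h\alpha^{k-1}(k-1)!$.
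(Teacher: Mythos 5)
Your proposal is correct and follows essentially the same route as the paper: the same adversary $\MC{A}^*(c)$ on the same job set $\MC{J}^*_{k,\alpha}$, the same density formulas from Lemma~\ref{lemma-def-le}, and the same comparison of total released workload against the summed density budget. The only difference is in the final evaluation: the paper fixes $k=6$, $\alpha=5$ and computes the two sums numerically ($\approx 1.14506\times 10^7 h$ vs.\ $\approx 5.47695\times 10^6 h$, ratio $\approx 2.0907$), whereas you let $k\to\infty$ and obtain the same ratio as a closed-form quotient of truncated exponential series — a slightly cleaner presentation of the identical bound.
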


\begin{proof}
Consider the adversary $\MC{A}^*(c)$ and any feasible online algorithm $\Gamma$ for this problem. Let $t$ be the time for which $\MC{A}^*(c)$ terminates its execution.
If $t < k\alpha^2$, then, by Lemma~\ref{lemma-opt-offline-density}, the competitive factor of $\Gamma$ with respect to the input job set, $\MC{J}^*_{k,\alpha}(t)$, is strictly greater than $c$.

On the other hand, if $t = k\alpha^2$, then the number of jobs $\Gamma$ can schedule is at most $\sum_{0\le i<k}\sum_{0\le j<\alpha^2}c\cdot\hat\varrho\BIGP{\MC{J}^*_{k,\alpha}(i\alpha^2+j)}$.
According to Lemma~\ref{lemma-def-le}, we know that:
\begin{itemize}
\item
For $i = 0$ and $0\le j < \alpha^2$,
$$\hat\varrho\BIGP{\MC{J}^*_{k,\alpha}(i\alpha^2+j)} = \rho\BIGP{\MC{J}^*_{k,\alpha}(j), 0,k\alpha^2} = \frac{j+1}{k\alpha^2}\cdot h.$$

\item
For $0 < i < k$ and $0\le j < \alpha$,
\begin{align*}
\hat\varrho\BIGP{\MC{J}^*_{k,\alpha}(i\alpha^2+j)} & = \rho\BIGP{\MC{J}^*_{k,\alpha}(i\alpha^2+j), (i-1)\alpha^2,k\alpha^2} \\
& = \frac{1}{(k-i+1)\alpha^2}\cdot\BIGP{\alpha^2 + (j+1)\cdot\alpha(k-i)}\cdot\alpha^{i-1}h\cdot\frac{(k-1)!}{(k-i)!}.
\end{align*}

\item
For $0 < i < k$ and $\alpha \le j < \alpha^2$,
$$\hat\varrho\BIGP{\MC{J}^*_{k,\alpha}(i\alpha^2+j)} = \rho\BIGP{\MC{J}^*_{k,\alpha}(i\alpha^2+j), i\alpha^2,k\alpha^2} = \frac{j+1}{(k-i)\alpha^2}\cdot \alpha^ih\cdot\frac{(k-1)!}{(k-i-1)!}.$$
\end{itemize}

By choosing $k$ to be $6$ and $\alpha$ to be $5$, a direct computation shows that 
$$\sum_{0\le i<k}\sum_{0\le j<\alpha^2}c\cdot\hat\varrho\BIGP{\MC{J}^*_{k,\alpha}(i\alpha^2+j)} \approx c\cdot 5.47695 \times 10^6h,$$
while the total number of jobs is $$\sum_{0\le i< k}\alpha^2\cdot\alpha^ih\cdot\frac{(k-1)!}{(k-i-1)!} \approx 1.14506 \times 10^7h.$$
By choosing $c = 2.09$, from the direct computation we know that $$\sum_{0\le i<k}\sum_{0\le j<\alpha^2}c\cdot\hat\varrho\BIGP{\MC{J}^*_{k,\alpha}(i\alpha^2+j)} < \sum_{0\le i< k}\alpha^2\cdot\alpha^ih\cdot\frac{(k-1)!}{(k-i-1)!},$$
which is a contradiction to the assumption that $\Gamma$ is a feasible online algorithm. Therefore, we must have $t < k\alpha^2$, and $\Gamma$ has competitive factor at least $c = 2.09$. Since this argument holds for any $h > 0$, this implies a lower bound of $2.09$ on both the competitive factor and the asymptotic competitive factor of any online algorithm.

We remark that, although it may seem confusing that the densities could be fractional numbers while we need integral number of machines,  
this can be avoided by setting $h$ to be $k!\alpha^2\cdot h^\prime$, for some positive interger $h^\prime$. This will make integral densities and we still get the same conclusion.
\qed
\end{proof}

%


\section{$5.2$-Competitive Packing-via-Density}

As indicated in Lemma~\ref{lemma-opt-offline-density}, to come up with a good scheduling algorithm for online machine-minimizing job scheduling with unit jobs, it suffices to compute a good approximation of the offline density for the entire job set, as this corresponds directly to the number of machines required by any optimal schedule.
From the proofs for the problem complexity in Section~\ref{subsec-lower-bound}, for any job set $\MC{J}$ and $t\ge 0$, the gap between $\hat{\varrho}\BIGP{\MC{J}}$, which is the maximum offline density for the entire job set, and $\hat{\varrho}\BIGP{\MC{J}(t)}$, which is the maximum density the online algorithm for the jobs arrived before and at time $t$, can be arbitrarily large. For instance, in the simple job set $\MC{J}^*_2$, we have $\hat\varrho\BIGP{\MC{J}^*_2} = d$ while $\hat\varrho\BIGP{\MC{J}^*_2(t)} = t+1$ for all $0\le t<d$.

In~\cite{Shi:2008:OBP:1328331.1328337}, the authors proved that, simply using $\CEIL{\hat\varrho\BIGP{\MC{J}(t)}}$ to approximate the offline density as suggested in the classical mainstream \emph{any fit} algorithms, such as \emph{best fit}, \emph{first fit}, etc., can results in the deadline misses of $\Theta\BIGP{\log n}$ jobs even for the job set $\MC{J}^*_2$, meaning that we will have to use $\Theta\BIGP{\log n}$ machines in the very last moment in order to prevent deadline misses if we apply these classical packing algorithms. 
Therefore, additional space sparing at each moment is necessary for obtaining a better approximation guarantee on the offline density in later times.
%
%
One natural question to ask is: 

\noindent
\begin{quote} 
\emph{Is there a constant $c$ such that $c\cdot\hat\varrho\BIGP{\MC{J}(t)}$ is an approximation of $\hat\varrho\BIGP{\MC{J}}$ for all $t\ge 0$?}
\end{quote}

In this section, we give a positive answer to the above question in a slightly more general way.
We show that, with a properly chosen constant $c$, using $\CEIL{c\cdot\hat\varrho\BIGP{\MC{J}(t)}}$ machines at all times gives a feasible scheduling which is also $c$-competitive for the machine-minimizing job scheduling with unit job execution time and \emph{arbitrary job deadlines}.

\paragraph{The \emph{packing-via-density($c$)} algorithm.}

At time $t$, $t \ge 0$, the algorithm computes the 
maximum density it has seen so far, i.e., $\hat\varrho\BIGP{\MC{J}(t)}$.
Then the algorithm assigns $\CEIL{c\cdot\hat\varrho\BIGP{\MC{J}(t)}}$ jobs with earliest deadlines form the ready queue for execution.

\bigskip

Since $\hat\varrho\BIGP{\MC{J}(t)} \le \hat\varrho\BIGP{\MC{J}}$ for all $t \ge 0$, by Lemma~\ref{lemma-opt-offline-density}, we have $\hat\varrho\BIGP{\MC{J}(t)} \le \mathrm{OPT}\BIGP{\MC{J}}$ for all $t\ge 0$. Hence, we know that the schedule produced by \emph{packing-via-density(c)} is $c$-competitive as long as it is feasible.
%
In the following, we show that, for a properly chosen constant $c$, \emph{packing-via-density(c)} always produces a feasible schedule for any upcoming job set. To this end, for any job set, we present two reductions to obtain a sequence whose structure is relatively simple in terms of the altering of defining intervals,
followed by providing a direct analysis on the potential function of that sequence.

\paragraph{Feasibility of \emph{packing-via-density(c)}.}

For any job set $\MC{J}$, any $t_1,t_2$ with $0\le t_1<t_2$, and any $c>0$, consider the potential function $\Phi_c\BIGP{\MC{J},t_1,t_2}$ defined as
$$\Phi_c\BIGP{\MC{J},t_1,t_2} = 
\BIGP{\sum_{t_1\le t<t_2}c\cdot\hat\varrho\BIGP{\MC{J}(t)}} - \MC{J}\BIGP{t_1,t_2}.$$
Literally, in this potential function we consider the sum of maximum densities over each moment between time $t_1$ and time $t_2$, subtracted by the total amount of workload which arrives and has to be done within the time interval $\BIGCP{t_1,t_2}$. 
%
Since \emph{packing-via-density(c)} assigns $\CEIL{c\cdot\hat\varrho\BIGP{\MC{J}(t)}}$ jobs for execution for any moment $t$,
by Lemma~\ref{lemma-feasibility-characterization}, we have the feasibility of this algorithm if and only if $\Phi_c\BIGP{\MC{J},t_1,t_2} \ge 0$ for all $0\le t_1<t_2$.

Below, we present our first reduction and show that, it suffices to prove the non-negativity of this potential function for any job set with a universal deadline.
For any $d \ge 0$, consider the job set $\MC{J}_d$ defined as follows.
For each $\BIGP{i,j} \in \MC{J}$ such that $j\le d$, we create a job $\BIGP{i,d}$ and put it into $\MC{J}_d$.

%

\begin{lemma}[\emph{Reduction to the case of equal deadlines}]
\label{lemma-reduction-equal-deadline}
For any $c\ge 0$, $d\ge 0$, and $0\le t<d$, we have $\Phi_c\BIGP{\MC{J}_d,t,d} \le \Phi_c\BIGP{\MC{J},t,d}$.
\end{lemma}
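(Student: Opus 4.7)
The plan is to prove the inequality by reducing it to a termwise comparison of maximum densities. First I would observe that the two workload terms in the definitions of the two potentials coincide: the construction sends each job $(i,j) \in \MC{J}$ with $j \le d$ to a job $(i,d) \in \MC{J}_d$ and introduces no other jobs, so both $\MC{J}_d(t,d)$ and $\MC{J}(t,d)$ equal $\BIGC{\BIGBP{j \in \MC{J} : t \le a_j, \: d_j \le d}}$. Hence proving $\Phi_c\BIGP{\MC{J}_d,t,d} \le \Phi_c\BIGP{\MC{J},t,d}$ reduces to establishing $\sum_{t \le s < d} \hat\varrho\BIGP{\MC{J}_d(s)} \le \sum_{t \le s < d} \hat\varrho\BIGP{\MC{J}(s))}$, and the cleanest route is to do it term-by-term.

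Fix $s$ with $t \le s < d$. I would prove the stronger pointwise claim $\MC{J}_d(s)(\ell,r) \le \MC{J}(s)(\ell,r)$ for every $0 \le \ell < r$ by a brief case analysis on $r$. If $r < d$, every job in $\MC{J}_d(s)$ has deadline exactly $d > r$, so the left-hand side is simply zero. If $r \ge d$, the construction provides an injection from the jobs of $\MC{J}_d(s)$ contributing to $\MC{J}_d(s)(\ell,r)$ into the jobs of $\MC{J}(s)$ contributing to $\MC{J}(s)(\ell,r)$: each $(i,d) \in \MC{J}_d$ with $\ell \le i \le s$ comes from a unique $(i,j) \in \MC{J}$ with $\ell \le i \le s$ and $j \le d \le r$. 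Dividing both sides by $r-\ell$ then gives $\rho\BIGP{\MC{J}_d(s),\ell,r} \le \rho\BIGP{\MC{J}(s),\ell,r}$ for every interval, and passing to the maximum yields $\hat\varrho\BIGP{\MC{J}_d(s)} \le \hat\varrho\BIGP{\MC{J}(s)}$.

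Summing this pointwise inequality over $s = t, t+1, \dots, d-1$, multiplying by $c \ge 0$, and subtracting the common workload term on both sides delivers the claim. I do not expect any serious obstacle here: the map $\MC{J} \mapsto \MC{J}_d$ only discards jobs whose deadlines exceed $d$ and relaxes the deadlines of those that remain, so the workload contained in every relevant sub-interval is monotonically decreased, and with it the maximum density $\hat\varrho(\cdot)$ at every moment $s$.
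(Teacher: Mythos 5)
Your proposal is correct and follows essentially the same route as the paper: both arguments show termwise that $\BIGP{\MC{J}_d(s)}(\ell,r) \le \BIGP{\MC{J}(s)}(\ell,r)$ for every interval (via a case split on whether $r$ is below or at least $d$), deduce $\hat\varrho\BIGP{\MC{J}_d(s)} \le \hat\varrho\BIGP{\MC{J}(s)}$ for each $s$, and combine this with the observation that the subtracted workload terms $\MC{J}_d(t,d) = \MC{J}(t,d)$ coincide. The only cosmetic difference is that the paper splits the second case into $r = d$ (equality) and $r > d$ (inequality), while you handle them together with an explicit injection.
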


\begin{proof}
Consider the job set $\MC{J}(t^*)$ for any moment $t^*$ with $t \le t^* < d$. For any time interval $\BIGCP{t_1,t_2}$ such that $0\le t_1<t_2$, by the definition of $\MC{J}_d$, we know that: (1) If $t_2 < d$, then $\BIGP{\MC{J}_d(t^*)}(t_1,t_2) = 0$. (2) If $t_2 = d$, then $\BIGP{\MC{J}_d(t^*)}(t_1,t_2) = \BIGP{\MC{J}(t^*)}(t_1,t_2)$. (3) If $t_2 > d$, then 
$$\BIGP{\MC{J}_d(t^*)}(t_1,t_2) = \BIGP{\MC{J}(t^*)}(t_1,d) \le \BIGP{\MC{J}(t^*)}(t_1,t_2).$$

Therefore, for all cases, we have $\BIGP{\MC{J}_d(t^*)}(t_1,t_2) \le \BIGP{\MC{J}(t^*)}(t_1,t_2)$, which in turn implies $\rho\BIGP{\MC{J}_d(t^*),t_1,t_2} \le \rho\BIGP{\MC{J}(t^*),t_1,t_2}$, and $\hat\varrho\BIGP{\MC{J}_d(t^*)} \le \hat\varrho\BIGP{\MC{J}(t^*)}$.
Moreover, we have $\MC{J}_d(t,d) = \MC{J}(t,d)$. 
This proves this lemma.
\qed
\end{proof}

As the mapping from $\MC{J}$ to $\MC{J}_d$ is well-defined for each $d \ge 0$, by Lemma~\ref{lemma-reduction-equal-deadline}, the non-negativity of the potential function with respect to \emph{any job set with a universal deadline} will in turn imply the non-negativity of that with respect to the given job set $\MC{J}$.
%
Therefore, it suffices to show that,
%
for any job set $\MC{J}_d = \BIGP{d,\sigma_d}$, we have $\Phi_c\BIGP{\MC{J}_d,0,d} \ge 0$.

Next, we make another reduction and show that, it suffices to consider job sets with non-decreasing arrival sequences:
%
(i) If $\sigma_d$ is already a non-decreasing sequence, then there is nothing to argue. 
(ii) Otherwise, let $k$, $0\le k<d-1$, be the largest integer such that $\sigma_d(k) > \sigma_d(k+1)$. Furthermore, let $m$, $k < m< d$, be the largest integer such that $\sigma_d(m)$ is under the average of $\sigma_d(k), \sigma_d(k+1), \ldots, \sigma_d(m)$, i.e.,
$$\sigma_d(m) < \frac{\sum_{k\le i\le m}\sigma_d(i)}{m-k+1}, \enskip \text{and,} \enskip \sigma_d(m+1) \ge \frac{\sum_{k\le i\le m+1}\sigma_d(i)}{m-k+2} \enskip \text{if} \enskip m < d-1.$$
%
%
Consider the job set $\MC{J}_{d,k,m} = \BIGP{d,\sigma_{d,k,m}}$
defined as follows.
For each $i$ with $0\le i < k$ or $m<i<d$, we set
$\sigma_{d,k,m}(i) = \sigma_d(i)$. Otherwise, for $i$ with $k\le i\le m$, we set 
$$\sigma_{d,k,m}(i) = \frac{\sum_{k\le i\le m}\sigma_d(i)}{m-k+1}.$$
The following lemma shows that the effect of this change on the potential of the resulting job set is non-increasing.

\begin{lemma}[\emph{Reduction to non-decreasing sequence}]
\label{lemma-reduction-non-decreasing}
$$\Phi_c\BIGP{\MC{J}_{d,k,m},0,d} \le \Phi_c\BIGP{\MC{J}_d,0,d}.$$
\end{lemma}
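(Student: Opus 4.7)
The plan is to first observe that the averaging operation preserves the total arrival count, i.e., $\sum_{0\le i<d}\sigma_{d,k,m}(i) = \sum_{0\le i<d}\sigma_d(i)$, and hence $\MC{J}_{d,k,m}(0,d) = \MC{J}_d(0,d)$. Therefore
$$
\Phi_c\BIGP{\MC{J}_{d,k,m},0,d} - \Phi_c\BIGP{\MC{J}_d,0,d} \enskip = \enskip c\sum_{t=0}^{d-1}\BIGP{\hat\varrho\BIGP{\MC{J}_{d,k,m}(t)} - \hat\varrho\BIGP{\MC{J}_d(t)}},
$$
and it suffices to prove the pointwise inequality $\hat\varrho\BIGP{\MC{J}_{d,k,m}(t)} \le \hat\varrho\BIGP{\MC{J}_d(t)}$ for every $0\le t<d$.

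The argument rests on two structural properties. First, by the maximality of $k$, the sequence $\sigma_d$ is non-decreasing on $[k+1,d-1]$. Second, writing $A_t = \frac{1}{t-k+1}\sum_{i=k}^{t}\sigma_d(i)$ for the running average starting at $k$, the sequence $(A_t)_{t=k}^{m}$ is non-increasing, and $\sigma_d(s) \ge A_m$ for every $s\in[m+1,d-1]$. The monotonicity is by contradiction: should $A_{u+1}>A_u$ for some $u\in[k,m-1]$ (equivalently $\sigma_d(u+1)>A_u$), the non-decreasing tail of $\sigma_d$ allows one to show by a straightforward induction that $\sigma_d(s)>A_{s-1}$ for every $s\in[u+1,m]$. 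Taking $s=m$ yields $\sigma_d(m)>A_{m-1}$, which contradicts the defining property of $m$ (namely $\sigma_d(m)<A_m$, which is easily seen to be equivalent to $\sigma_d(m)<A_{m-1}$). The tail bound follows from the condition on $\sigma_d(m+1)$ in the definition of $m$ together with property (I), and is vacuous when $m=d-1$.

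With these in place, for each $t$ I would bound $\rho\BIGP{\MC{J}_{d,k,m}(t),\ell,d} \le \hat\varrho\BIGP{\MC{J}_d(t)}$ for every $\ell\le t$, split according to the position of $\ell$ relative to the averaged block. When $\ell\le k$, the two sequences agree on $[\ell,k-1]$, and on the overlap with $[k,m]$ the prefix sum in $\MC{J}_{d,k,m}$ is at most the prefix sum in $\MC{J}_d$ by the monotonicity of $(A_t)$ (equality when $t>m$ by preservation of totals), so the density is bounded by $\rho\BIGP{\MC{J}_d(t),\ell,d}$. When $\ell>m$ the two sequences agree on $[\ell,t]$ and the densities coincide. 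The delicate case is $\ell\in(k,m]$. For $t\le m$, the density equals $(t-\ell+1)A_m/(d-\ell)$, which by direct cross-multiplication is bounded by $(t-k+1)A_m/(d-k) \le \rho\BIGP{\MC{J}_d(t),k,d}$ thanks to $A_t\ge A_m$. For $t>m$, setting $Q=\sum_{i=m+1}^{t}\sigma_d(i)$, the density $f(\ell)=[(m-\ell+1)A_m+Q]/(d-\ell)$ is monotone in $\ell$ on $[k,m]$: a one-line cross-multiplication shows $f(\ell+1)-f(\ell)$ has the sign of $Q-(d-m-1)A_m$. If $Q\le(d-m-1)A_m$, then $f$ is non-increasing and $f(\ell)\le f(k)=\rho\BIGP{\MC{J}_d(t),k,d}$. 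If $Q>(d-m-1)A_m$, then $f$ is increasing, and this very inequality gives $f(m)=(A_m+Q)/(d-m)\le Q/(d-m-1)=\rho\BIGP{\MC{J}_d(t),m+1,d}$.

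The main obstacle I expect is property (II): the definition of $m$ is only a local inequality at the right endpoint, and propagating it backwards to all of $[k,m]$ requires the non-decreasing tail of $\sigma_d$ provided by the maximality of $k$. Once the monotonicity of $(A_t)$ is in hand, the pointwise density comparisons reduce to straightforward cross-multiplications, separated cleanly by whether the defining interval starts before, inside, or after the averaged block.
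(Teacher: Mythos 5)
Your proposal is correct and shares the paper's overall skeleton: total workload is preserved, so everything reduces to the pointwise comparison $\hat\varrho\BIGP{\MC{J}_{d,k,m}(t)} \le \hat\varrho\BIGP{\MC{J}_d(t)}$, handled by cases on where the left end of a candidate interval sits relative to the averaged block $[k,m]$. Where you diverge is in how the block is handled. The paper argues indirectly that the defining interval of the modified sequence never has its left end strictly inside $(k,m]$ (via Lemma~\ref{lemma-def-le-cut} and Lemma~\ref{prop-density-advancing}), and then asserts without further justification that densities measured from any $t^*\le k$ can only decrease; that assertion is exactly the statement $A_m\le A_t$ for $k\le t\le m$, i.e.\ the non-increasing property of the running averages, which the paper never proves. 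Your property~(II), with the backward induction using the non-decreasing tail forced by the maximality of $k$, supplies precisely this missing step, so your write-up is the more complete of the two. Your treatment of left ends $\ell\in(k,m]$ by direct cross-multiplication (monotonicity of $f(\ell)$ governed by the sign of $Q-(d-m-1)A_m$, with the two subcases landing on $\rho\BIGP{\MC{J}_d(t),k,d}$ and $\rho\BIGP{\MC{J}_d(t),m+1,d}$ respectively) is also sound and replaces the paper's defining-interval argument with an elementary computation; the cost is a slightly longer case analysis, the benefit is that you never need to reason about which interval attains the maximum. I see no gap in your argument.
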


\begin{proof}
The amount of total workload in $\MC{J}_{d,k,m}$ and $\MC{J}_d$ are the same, i.e., $\MC{J}_{d,k,m}(0,d) = \MC{J}_d(0,d)$. 
Therefore, it suffices to argue that, for each $0\le t<d$, we have $\hat\varrho\BIGP{\MC{J}_{d,k,m}(t)} \le \hat\varrho\BIGP{\MC{J}_d(t)}$. Depending on the value of $t$, we consider the following three cases.

\begin{itemize}
\item For $0\le t < k$, we know that the two job sets, $\MC{J}_{d,k,m}(t)$ and $\MC{J}_d(t)$, are identical, since $\sigma_{d,k,m}(i) = \sigma_d(i)$, for all $0\le i\le t$. Hence, 
$\hat\varrho\BIGP{\MC{J}_{d,k,m}(t)} = \hat\varrho\BIGP{\MC{J}_d(t)}$.

\smallskip

\item
For $k\le t \le m$, we argue that, for all $t^*$ with $k\le t^*\le t$, we have $\ell\BIGP{\MC{J}_{d,k,m}(t), t^*} \le k$. Suppose for contradiction that $k < \ell\BIGP{\MC{J}_{d,k,m}(t), t^*} \le m$ for some $t^*$ with $k\le t^* \le t$. Then, by Lemma~\ref{lemma-def-le-cut}, we get
$$\frac{\sum_{\ell\BIGP{\MC{J}_{d,k,m}(t), t^*} \le i \le m}\sigma_{d,k,m}(i)}{m-\ell\BIGP{\MC{J}_{d,k,m}(t), t^*}+1} \ge \rho\BIGP{\MC{J}_{d,k,m}(t), \ell\BIGP{\MC{J}_{d,k,m}(t), t^*}, d}.$$
By the definition of $k$ and $m$, we know that 
$$\frac{\sum_{k \le i < \ell\BIGP{\MC{J}_{d,k,m}(t), t^*}}\sigma_{d,k,m}(i)}{\ell\BIGP{\MC{J}_{d,k,m}(t), t^*}-k} \ge \frac{\sum_{\ell\BIGP{\MC{J}_{d,k,m}(t), t^*} \le i \le m}\sigma_{d,k,m}(i)}{m-\ell\BIGP{\MC{J}_{d,k,m}(t), t^*}+1},$$
which implies that, taking the subsequence $$\BIGP{\sigma_{d,k,m}(k), \sigma_{d,k,m}(k+1), \ldots, \sigma_{d,k,m}(\ell\BIGP{\MC{J}_{d,k,m}(t), t^*}-1)}$$ into consideration can never make the density worse, i.e., $$\rho\BIGP{\MC{J}_{d,k,m}(t), k, d} \ge \rho\BIGP{\MC{J}_{d,k,m}(t), \ell\BIGP{\MC{J}_{d,k,m}(t), t^*}, d},$$
a contradiction to the definition of $\ell\BIGP{\MC{J}_{d,k,m}(t), t^*}$.

Hence, we have $\ell\BIGP{\MC{J}_{d,k,m}(t), t^*} \le k$ for all $t^*$ with $k\le t^*\le t$.
This implies that $\hat\varrho\BIGP{\MC{J}_{d,k,m}(t)} \le \hat\varrho\BIGP{\MC{J}_d(t)}$ since $\rho\BIGP{\MC{J}_{d,k,m}(t), t^*, d} \le \rho\BIGP{\MC{J}_d(t), t^*, d}$ for all $0 \le t^* \le t$.

\smallskip

\item
For $m < t < d$, by a similar argument to the above case, we know that, for all $t^*$ with $k\le t^*\le t$, we either have $\ell\BIGP{\MC{J}_{d,k,m}(t), t^*} \le k$, or $\ell\BIGP{\MC{J}_{d,k,m}(t), t^*} > m$. 

Therefore, $\hat\varrho\BIGP{\MC{J}_{d,k,m}(t)} = \hat\varrho\BIGP{\MC{J}_d(t)}$ since $\rho\BIGP{\MC{J}_{d,k,m}(t), t^*, d} = \rho\BIGP{\MC{J}_d(t), t^*, d}$ for all $0 \le t^* \le k$ and $m < t^* \le t$.
\end{itemize}
For all $0\le t < d$, we have $\hat\varrho\BIGP{\MC{J}_{d,k,m}(t)} = \hat\varrho\BIGP{\MC{J}_d(t)}$. This proves the lemma.
\qed
\end{proof}

Repeating the process described in (ii) above, we get a non-decreasing sequence $\sigma^\uparrow_d$ for the job set $\MC{J}_d$. By Lemma~\ref{lemma-reduction-non-decreasing}, we know that the non-negativity of the potential function with respect to $({d,\sigma^\uparrow_d})$ will in turn imply the non-negativity of that with respect to $\MC{J}_d$.
%

%

\smallskip

Now, we consider the job set $\MC{J}^\uparrow_d = ({d, \sigma^\uparrow_d})$
%
and provide a direct analysis on the value of 
$\Phi_c({\MC{J}^\uparrow_d,0,d})$. To take the impact of unknown job arrivals and the unpaid-debt excluded implicitly in the computation of $\hat\varrho({\MC{J}^\uparrow_d(t)})$, we exploit the non-decreasing property of the sequence and use a backward analysis. More precisely, starting from the tail, 
for each $0\le i \le \FLOOR{\log_2d}$, we consider 
the sequence 
$$\MC{S}_i = \BIGP{\sigma^\uparrow_d(d-2^i), \sigma^\uparrow_d(d-2^i+1), \ldots, \sigma^\uparrow_d(d-1)},$$
whose length grows exponentially as $i$ increases.
Also refer to Fig.~\ref{fig-8-competitive-seq} for an illustration.
The following lemma shows that a simple argument, which takes eight times the sum of densities over $\MC{S}_i$ to cover the total workload in $\MC{S}_{i+1}$, already asserts the feasibility of \emph{packing-via-density$(8)$}.

\begin{figure*}[t]
{\includegraphics[scale=1]{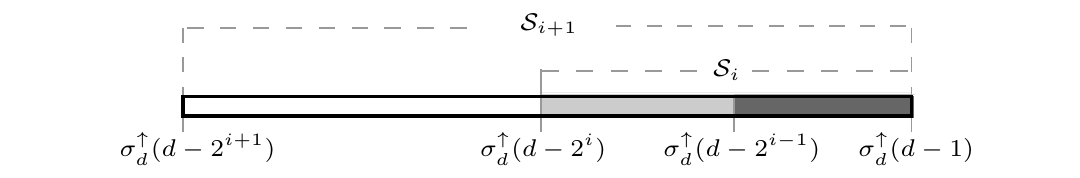}}
\caption{An illustration for the sequences $\MC{S}_i$, $0 < i \le \FLOOR{\log_2d}$. The dark-grey area denotes the sequence $\MC{S}_{i-1}$, whose densities have already been counted in the last iteration. The light-grey area denotes the portion of $\MC{S}_i$ whose densities are not counted yet. The white area denotes the portion whose workload are to be covered in this iteration.}
\label{fig-8-competitive-seq}
\end{figure*}

\begin{lemma}[\emph{Feasibility of packing-via-density$(8)$}]
$$\Phi_8\BIGP{\MC{J}^\uparrow_d,0,d} \ge 0.$$
\end{lemma}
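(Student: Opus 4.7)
My plan is to prove, by a dyadic backward induction on $i$, the strengthened inequality
\[
8\sum_{t\in \MC{S}_i}\hat\varrho\BIGP{\MC{J}^\uparrow_d(t)}\;\ge\;W_{\le i+1},\qquad 0\le i\le\FLOOR{\log_2 d},
\]
where $W_{\le j}=\sum_{s\in \MC{S}_j}\sigma^\uparrow_d(s)$ denotes the cumulative workload of $\MC{S}_j$ (with $\sigma^\uparrow_d$ zero-extended to negative indices, so that $\MC{S}_{i+1}\supseteq[0,d-1]$ at $i=\FLOOR{\log_2 d}$). The lemma then follows by specialising the claim at $i=\FLOOR{\log_2 d}$: the right-hand side becomes the total workload of $\MC{J}^\uparrow_d$, while the left-hand side is bounded by $8\sum_{t=0}^{d-1}\hat\varrho\BIGP{\MC{J}^\uparrow_d(t)}$, which is precisely the inequality encoded in $\Phi_8\BIGP{\MC{J}^\uparrow_d,0,d}\ge 0$.

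For the base case $i=0$, where $\MC{S}_0=\BIGBP{d-1}$, I lower-bound $\hat\varrho\BIGP{\MC{J}^\uparrow_d(d-1)}$ by $\rho\BIGP{\MC{J}^\uparrow_d(d-1),d-1,d}=\sigma^\uparrow_d(d-1)$ and invoke the non-decreasing property of $\sigma^\uparrow_d$ to conclude $W_{\le 1}\le 2\sigma^\uparrow_d(d-1)\le 8\hat\varrho\BIGP{\MC{J}^\uparrow_d(d-1)}$. For the inductive step, let $T_{i+1}=\MC{S}_{i+1}\setminus\MC{S}_i$, which has length $2^i$. The crucial observation is that for every $t\in T_{i+1}=\BIGCP{d-2^{i+1},d-2^i-1}$ the scope window $\BIGCP{d-2^{i+2},d}$ contains $t$ and also covers the entire next dyadic layer $T_{i+2}=\MC{S}_{i+2}\setminus\MC{S}_{i+1}$, so
\[
\hat\varrho\BIGP{\MC{J}^\uparrow_d(t)}\;\ge\;\rho\BIGP{\MC{J}^\uparrow_d(t),d-2^{i+2},d}\;\ge\;\frac{W_{\le i+2}-W_{\le i+1}}{2^{i+2}}.
\]
Summing this uniform lower bound over the $2^i$ slots of $T_{i+1}$ and multiplying by $8$ yields $8\sum_{t\in T_{i+1}}\hat\varrho\BIGP{\MC{J}^\uparrow_d(t)}\ge 2\BIGP{W_{\le i+2}-W_{\le i+1}}$; adding the inductive hypothesis and using $W_{\le i+2}\ge W_{\le i+1}$ gives $8\sum_{t\in\MC{S}_{i+1}}\hat\varrho\ge W_{\le i+1}+2\BIGP{W_{\le i+2}-W_{\le i+1}}\ge W_{\le i+2}$, closing the induction.

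The main difficulty I anticipate is conceptual rather than computational: at every level one must choose the scope window $\BIGCP{d-2^{i+2},d}$ wide enough to bracket both the current time $t\in T_{i+1}$ \emph{and} the full next dyadic layer $T_{i+2}$, while still being a valid interval recognised by $\hat\varrho\BIGP{\MC{J}^\uparrow_d(t)}$. The dyadic doubling is exactly what reconciles these two constraints, and the monotonicity of $\sigma^\uparrow_d$ is only needed in the base case. A minor bookkeeping matter is the boundary when $d$ is not a power of two: the leftmost dyadic block dips below time $0$, which I would handle by clamping the scope at $\max(0,d-2^{i+2})$; this only tightens the density lower bound and leaves the inductive inequality intact.
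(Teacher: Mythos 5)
Your proof is correct. You keep the paper's backward dyadic decomposition into the blocks $\MC{S}_i$, but your charging scheme in the inductive step is genuinely different. The paper charges the workload of the block $\MC{S}_{i+1}\setminus\MC{S}_i$ to the densities at the \emph{later} times $\MC{S}_i\setminus\MC{S}_{i-1}$: it lower-bounds $\hat\varrho\BIGP{\MC{J}^\uparrow_d(j)}$ there via the window $\BIGCP{d-2^i,d}$ using only jobs arriving inside that same block, each at least $\sigma^\uparrow_d(d-2^i)$, and upper-bounds the workload of the next block by $2^i\cdot\sigma^\uparrow_d(d-2^i)$ --- so monotonicity is used on both sides at every level. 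You instead charge the workload of $T_{i+2}$ to the densities at times $t\in T_{i+1}$, exploiting that every job of $T_{i+2}$ has already arrived by such $t$ and lies in the window $\BIGCP{d-2^{i+2},d}$ of length $2^{i+2}$; summing over the $2^i$ slots of $T_{i+1}$ gives $8\sum_{t\in T_{i+1}}\hat\varrho\BIGP{\MC{J}^\uparrow_d(t)}\ge 2\BIGP{W_{\le i+2}-W_{\le i+1}}$, and your strengthened invariant $8\sum_{t\in\MC{S}_i}\hat\varrho\BIGP{\MC{J}^\uparrow_d(t)}\ge W_{\le i+1}$ telescopes cleanly. This buys two things: monotonicity of $\sigma^\uparrow_d$ is needed only in the base case, and the step loses only a factor of $4$ rather than $8$, so the identical argument would in fact establish $\Phi_4\BIGP{\MC{J}^\uparrow_d,0,d}\ge 0$ (base case needs $c\ge 2$, inductive step $c\ge 4$). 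Your boundary handling --- zero-extending $\sigma^\uparrow_d$ and clamping the window at $\max\BIGP{0,d-2^{i+2}}$, which only shrinks the denominator of the density --- is also sound.
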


\begin{proof}
%
For $0\le i \le \FLOOR{\log_2d}$, consider the contribution of the sequence $$\MC{S}_i = \BIGP{\sigma^\uparrow_d(d-2^i), \sigma^\uparrow_d(d-2^i+1), \ldots, \sigma^\uparrow_d(d-1)}$$
into the potential function $\Phi_8\BIGP{\MC{J}^\uparrow_d,0,d}$.
For $i > \FLOOR{\log_2d}$, $\MC{S}_i$ is defined to be a sequence of infinite zeros.
We prove this lemma by showing that, the sum of densities over $\MC{S}_i$, timed by $8$, is sufficient to cover the total workload of $\MC{S}_i$ and $\MC{S}_{i+1}$, for all $0\le i \le \FLOOR{\log_2d}$. 
Consider the following two cases.
\begin{itemize}
\item 
For $i = 0$, we have $\hat\rho\BIGP{\MC{J}^\uparrow_d, d-1} \ge \rho\BIGP{\MC{J}^\uparrow_d, d-1,d} = \sigma^\uparrow_d(d-1)$. 
By the non-decreasing property of $\sigma^\uparrow_d$, we know that $\sigma^\uparrow_d(d-2) \le \sigma^\uparrow_d(d-1)$.
Therefore, the sum of densities over $\MC{S}_0$, subtracted by the total workload in $\MC{S}_1$, is at least $$8\cdot\sigma^\uparrow_d(d-1) - \BIGP{\sigma^\uparrow_d(d-1) + \sigma^\uparrow_d(d-2)}> 0.$$

\smallskip

\item
For $0 < i \le \FLOOR{\log_2d}$, consider the element $\sigma^\uparrow_d\BIGP{d-2^i}$. From the non-decreasing property, we know that for all $d-2^i \le j < d-2^{i-1}$, $\sigma^\uparrow_d(j) \ge \sigma^\uparrow_d\BIGP{d-2^i}$. Therefore, 
$$\hat\rho\BIGP{\MC{J}^\uparrow_d(j), j} \ge \rho\BIGP{\MC{J}^\uparrow_d(j), d-2^i, d} \: \ge \: \frac{1}{2^i}\cdot\BIGP{j-(d-2^i)+1}\cdot\sigma^\uparrow_d(d-2^i).$$
Summing up $\hat\varrho\BIGP{\MC{J}^\uparrow_d(j)}$ over $d-2^i \le j < d-2^{i-1}$, we get
$$\sum_{d-2^i \le j < d-2^{i-1}}\hat\varrho\BIGP{\MC{J}^\uparrow_d(j)} \ge \frac{\sigma^\uparrow_d\BIGP{d-2^i}}{2^i}\sum_{1\le j \le 2^{i-1}}j = \frac{\sigma^\uparrow_d\BIGP{d-2^i}}{4}\BIGP{2^{i-1}+1}.$$
On the other hand, the amount of workload in $\MC{S}_{i+1} \backslash \MC{S}_i$ is at most $2^i\cdot\sigma^\uparrow_d(d-2^i)$. Hence, 
\begin{align*}
& \sum_{d-2^i \le j < d-2^{i-1}}8\cdot\hat\varrho\BIGP{\MC{J}^\uparrow_d(j)} - \sum_{d-2^{i+1}\le j< d-2^i}\sigma^\uparrow_d(j) \\
\ge \enskip & 2\BIGP{2^{i-1}+1}\sigma^\uparrow_d(d-2^i) - 2^i\cdot\sigma^\uparrow_d(d-2^i) > 0.
\end{align*}
Also refer to Fig.~\ref{fig-8-competitive-seq} for an illustration.
\end{itemize}

Finally, consider the potential function $\Phi_8\BIGP{\MC{J}^\uparrow_d,0,d}$, which is $$\sum_{0\le t<d}\BIGP{8\cdot\hat\varrho\BIGP{\MC{J}^\uparrow_d(t)} - \sigma^\uparrow_d(t)}$$ by definition. Each item in the summation is counted exactly once in our case study. Therefore we have $\Phi_8\BIGP{\MC{J}^\uparrow_d,0,d} \ge 0$.
\qed
\end{proof}

The basic argument gives a hint on the reason why the potential function can be made positive-definite by a carefully chosen constant $c$. Below we show that, a further generalized approach gives a better bound.
The idea is to further exploit the densities generated by $\MC{S}_{i+1}$ itself: when the amount of workload in $\MC{S}_{i+1}$ is relatively low, then a smaller factor from $\MC{S}_i$ suffices, and when the amount of workload in $\MC{S}_{i+1}$ becomes higher, then most of the workload can be covered by the densities $\MC{S}_{i+1}$ itself generates.
In addition, we use the exponential base of $3$ to define the subsequences instead of $2$.

\begin{lemma}
\label{lemma-5-2-potential}
$$\Phi_{5.2}\BIGP{\MC{J}^\uparrow_d,0,d} \ge 0.$$
\end{lemma}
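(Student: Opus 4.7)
My plan is to extend the layered analysis from the $c=8$ proof in two ways: switch the exponential base from $2$ to $3$, and exploit the density contributions generated by layer $i$ itself (not only those of layer $i-1$) in the coverage of its own workload $W_i$. Define $\MC{S}_i := (\sigma^\uparrow_d(d-3^i), \ldots, \sigma^\uparrow_d(d-1))$ for $0 \le i \le \lfloor \log_3 d \rfloor$, together with the per-layer quantities
\[
D_i := \sum_{j \in \MC{S}_i \setminus \MC{S}_{i-1}} \hat{\varrho}\BIGP{\MC{J}^\uparrow_d(j)}, \qquad W_i := \sum_{j \in \MC{S}_i \setminus \MC{S}_{i-1}} \sigma^\uparrow_d(j).
\]
Since $\Phi_{5.2}(\MC{J}^\uparrow_d, 0, d) = \sum_i(5.2\, D_i - W_i)$, the target inequality is $5.2 \sum_i D_i \ge \sum_i W_i$.

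I would establish the layerwise inequality $c(D_{i-1} + D_i)/2 \ge W_i$ for each $i \ge 1$, which, together with a direct boundary check at $i = 0$ as in the $c=8$ proof, telescopes to the global inequality. Two complementary lower bounds on $D_i$ drive this. First, the same swap-summation trick used in the $c=8$ proof, now at base $3$, yields
\[
D_i \;\ge\; \frac{1}{3^i}\sum_{k=d-3^i}^{d-3^{i-1}-1}\sigma^\uparrow_d(k)\,(d - 3^{i-1} - k),
\]
which, via the non-decreasing property and $\mu_i := \sigma^\uparrow_d(d-3^i)$, specializes to the $\mu_i$-bound $D_i \ge \mu_i(2\cdot 3^{i-1} + 1)/3$. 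Second, by considering the worst non-decreasing distribution of the excess mass $W_i - 2\cdot 3^{i-1}\mu_i$ (concentrated at the right end of $\MC{S}_i \setminus \MC{S}_{i-1}$, with values capped by $\mu_{i-1}$), the same weighted sum produces a quadratic self-contribution of order $\Omega\bigl(W_i^2/(\mu_{i-1}\cdot 3^i)\bigr)$, which becomes dominant whenever $\mu_i$ is much smaller than $\mu_{i-1}$.

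Substituting the $\mu_{i-1}$-bound for $D_{i-1}$ and the sum of the two bounds for $D_i$ into $c(D_{i-1}+D_i)/2 \ge W_i$, and writing $y := W_i/(3^{i-1}\mu_{i-1}) \in [0, 2]$, the layerwise condition reduces in the extremal case $\mu_i = 0$ to $c\,[1/3 + y^2/4] \ge 3y$. The function $36y/(4+3y^2)$ attains its maximum at $y = 2/\sqrt{3}$ with value $9/\sqrt{3} = 3\sqrt{3} \approx 5.196$, so the worst case requires $c \ge 3\sqrt{3}$; rounding up to a convenient decimal yields $c = 5.2$.

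The principal obstacle will be verifying that the extremal configuration (a single "jump" with $\mu_i = 0$ and quadratic concentration of mass at the right endpoint of layer $i$) is indeed the tight one across all layers simultaneously, so that the telescoped inequality $5.2 \sum_i D_i \ge \sum_i W_i$ follows cleanly from the layerwise bound. A secondary subtlety is the handling of the boundary layers $i = 0$ and $i = \lfloor \log_3 d \rfloor$, together with the lower-order $+1$ in the $\mu_i$-bound; both are expected to contribute only additive slack but require a short direct check analogous to the $i=0$ step of the $c=8$ proof.
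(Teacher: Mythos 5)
Your proposal follows essentially the same route as the paper's proof: base-$3$ layers, a linear lower bound on the density of layer $i-1$ (your $\mu$-bound, the paper's $\frac{1}{2}(\alpha-1)^2\alpha^{i-2}\sigma^\uparrow_d(d-\alpha^i)$ term) combined with the quadratic self-contribution bound $D_i \ge W_i^2/(2\cdot 3^i\mu_{i-1})$ for layer $i$, then a normalization ($y=2\beta$) reducing the layerwise condition to a univariate quadratic inequality whose worst case gives $3\sqrt{3}\approx 5.196\le 5.2$. The only differences are cosmetic — you fix $\alpha=3$ and $\lambda_1=\lambda_2=c/2$ up front and exhibit the extremal point $y=2/\sqrt{3}$ explicitly, where the paper keeps these as parameters and optimizes at the end — so the argument is correct and matches the paper's.
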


\begin{proof}
Let $\alpha \in \MBB{N}$, $\beta, \lambda_1, \lambda_2 \in \MBB{R}$, where $\alpha \ge 2$, $0\le \beta \le 1$, and $\lambda_1, \lambda_2 \ge 0$, be constants to be decided later.

\smallskip

For $0\le i \le \FLOOR{\log_\alpha d}$, consider the subsequence $\MC{S}^\alpha_i$ defined as $$\MC{S}^\alpha_i = \BIGP{\sigma^\uparrow_d(d-\alpha^i), \sigma^\uparrow_d(d-\alpha^i+1), \ldots, \sigma^\uparrow_d(d-1)}.$$
For notational brevity, $\MC{S}^\alpha_i$ is defined to be 
a sequence of infinite zeros for $i > \FLOOR{\log_\alpha d}$.
%

%


\smallskip

We prove this lemma by showing that, with properly chosen constants $\alpha, \beta, \lambda_1$, and $\lambda_2$, for each $0 < i \le \FLOOR{\log_\alpha d}$, the total workload in $\MC{S}^\alpha_{i+1} \backslash \MC{S}^\alpha_i$ can be jointly covered by taking 
\begin{compactitem}
\item[(1)] 
$\lambda_1$ times the total density generated by $\MC{S}^\alpha_{i+1} \backslash \MC{S}^\alpha_i$ itself, and

\item[(2)]
$\lambda_2$ times the total density generated by $\MC{S}^\alpha_i \backslash \MC{S}^\alpha_{i-1}$.
\end{compactitem}

If this is true, then we get the non-negativity of the potential function $\Phi_{\lambda_1+\lambda_2}$, where the value $\lambda_1 + \lambda_2$ will be at most $5.2$ by our setting.

\smallskip

Consider the following two cases.

\begin{itemize}
\item 
For the base case, consider $\MC{S}^\alpha_0 = \BIGP{\sigma^\uparrow_d(d-1)}$. The density generated by $\MC{S}^\alpha_0$ is exactly $\sigma^\uparrow_d(d-1)$. 
By the non-decreasing property of $\sigma^\uparrow_d$, we know that the total workload in $\MC{S}^\alpha_1$ is at most $\alpha\cdot\sigma^\uparrow_d(d-1)$.
Hence, as long as $\lambda_1+\lambda_2 \ge \alpha$, the density generated by $\MC{S}^\alpha_0$ is sufficient to cover the total workload of $\MC{S}^\alpha_1$.

\smallskip

\item
For $0 < i \le \FLOOR{\log_\alpha d}$, let $\#_{i+1}$ denote the total amount of workload in $\MC{S}^\alpha_{i+1} \backslash \MC{S}^\alpha_i$. More precisely, we have
$$\#_{i+1} = \sum_{d-\alpha^{i+1} \le t < d-\alpha^i} \sigma^\uparrow_d(t).$$
Note that, from the non-decreasing property of $\sigma^\uparrow_d$, we have $0\le \#_{i+1} \le (\alpha-1)\alpha^i\cdot\sigma^\uparrow_d(d-\alpha^i)$.

\bigskip

Consider the density generated by
\begin{compactitem}
\item[\bf (i) $\MC{S}^\alpha_i \backslash \MC{S}^\alpha_{i-1}$.]
We have $\sigma^\alpha_d(t) \ge \sigma^\alpha_d(d-\alpha^i)$ for all $t \ge d-\alpha^i$. Hence,
\begin{align*}
\sum_{d-\alpha^i \le t < d-\alpha^{i-1}}\hat\varrho\BIGP{\MC{J}^\uparrow_d(t)} \: & \ge \: \sum_{d-\alpha^i \le t < d-\alpha^{i-1}}\rho\BIGP{\MC{J}^\uparrow_d(t),d-\alpha^i,d} \\
& \ge \: \sum_{1 \le k \le (\alpha-1)\alpha^{i-1}}\frac{1}{\alpha^i}\cdot k\cdot \sigma^\uparrow_d(d-\alpha^i) \\
& \ge \: \frac{1}{\alpha^i}\cdot \frac{1}{2}(\alpha-1)^2\alpha^{2i-2} \cdot\sigma^\uparrow_d(d-\alpha^i) \\
& = \: \frac{1}{2}\cdot(\alpha-1)^2\alpha^{i-2}\cdot\sigma^\uparrow_d(d-\alpha^i).
\end{align*}

\bigskip

\item[\bf (ii) $\MC{S}^\alpha_{i+1} \backslash \MC{S}^\alpha_i$.]
For this part, we derive a lower bound on the total amount of density generated by $\MC{S}^\alpha_{i+1} \backslash \MC{S}^\alpha_i$. In particular, we prove the following claim:
\begin{equation}
\sum_{d-\alpha^{i+1} \le t < d-\alpha^i}\hat\varrho\BIGP{\MC{J}^\uparrow_d(t)} \ge \frac{1}{2\cdot\alpha^{i+1}\cdot\sigma^\uparrow_d(d-\alpha^i)}\cdot\BIGP{\#_{i+1}}^2.
\label{ieq-5-2-competitive-density-lower-bound}
\end{equation}

\medskip

To see why we have this lower bound, consider each $t$ with $d-\alpha^{i+1}\le t < d-\alpha^i$, we have
$$\frac{1}{\alpha^{i+1}}\cdot\sum_{d-\alpha^{i+1} \le k \le t}\sigma^\uparrow_d(k) = \rho\BIGP{\MC{J}^\uparrow_d(t), d-\alpha^{i+1}, d} \le \hat\varrho\BIGP{\MC{J}^\uparrow_d(t)}.$$
In other words, for each $t^*$ with $d-\alpha^{i+1} \le t^* \le t$, 
$\sigma^\uparrow_d(t^*)$ implicitly contributes 
$\frac{1}{\alpha^{i+1}}\cdot\sigma^\uparrow_d(t^*)$ amount of density to $\rho\BIGP{\MC{J}^\uparrow_d(t), d-\alpha^{i+1}, d}$, which in turn serves as one lower bound to $\hat\varrho\BIGP{\MC{J}^\uparrow_d(t)}$. The total contribution of $\sigma^\uparrow_d(t^*)$ into $\sum_{d-\alpha^{i+1} \le t < d-\alpha^i}\hat\varrho\BIGP{\MC{J}^\uparrow_d(t)}$ is therefore at least $\BIGP{d-\alpha^i-t^*}\cdot\sigma^\uparrow_d(t^*)$, for all $d-\alpha^{i+1} \le t^* < d-\alpha^i$.
This is minimized as $t^*$ tends to $d-\alpha^i$. 

\smallskip

Moreover, from the non-decreasing property, we know that $\sigma^\uparrow_d(t^*)$ is at most $\sigma^\uparrow_d(d-\alpha^i)$. Therefore, when the total amount of workload in $\MC{S}^\alpha_{i+1} \backslash \MC{S}^\alpha_i$ is $\#_{i+1}$, a lower bound on the total density generated in terms of the overall contribution of each $\sigma^\uparrow_d(t^*)$, $d-\alpha^{i+1}\le t^*< d-\alpha^i$, is at least
$$\frac{1}{\alpha^{i+1}}\cdot\sigma^\uparrow_d(d-\alpha^i)\cdot\frac{1}{2}\cdot\BIGP{\frac{\#_{i+1}}{\sigma^\uparrow_d(d-\alpha^i)}}^2 = \frac{\BIGP{\#_{i+1}}^2}{2\cdot\alpha^{i+1}\cdot\sigma^\uparrow_d(d-\alpha^i)},$$
which proves our claim.
\end{compactitem}
\end{itemize}
We want to show that, with properly chosen constants $\lambda_1, \lambda_2$, and $\alpha$,
\begin{align}
\lambda_1\cdot\frac{1}{2\cdot\alpha^{i+1}}\cdot\frac{\BIGP{\#_{i+1}}^2}{\sigma^\uparrow_d(d-\alpha^i)} + \lambda_2\cdot\frac{1}{2}\cdot(\alpha-1)^2\alpha^{i-2}\cdot\sigma^\uparrow_d(d-\alpha^i) \ge \#_{i+1}.
\label{ieq-feasibility-lla}
\end{align}
Let $$\beta = \frac{\#_{i+1}}{(\alpha-1)\alpha^i\cdot\sigma^\uparrow_d(d-\alpha^i)}.$$
Note that, by the non-decreasing property and the definition of $\#_{i+1}$, we have $0\le \beta \le 1$. Then, the inequality~(\ref{ieq-feasibility-lla}) simplifies to
$$\lambda_1\cdot\frac{1}{2}\cdot\frac{\alpha-1}{\alpha}\cdot\beta^2 + \lambda_2\cdot\frac{1}{2}\cdot\frac{\alpha-1}{\alpha^2} \ge \beta.$$
Define the function $F(\beta)$ as $$F(\beta) := \lambda_1\cdot\frac{1}{2}\cdot\frac{\alpha-1}{\alpha}\cdot\beta^2 + \lambda_2\cdot\frac{1}{2}\cdot\frac{\alpha-1}{\alpha^2} - \beta.$$
Then inequality~(\ref{ieq-feasibility-lla}) holds if and only if $F(\beta) \ge 0$ for all $0\le \beta\le 1$. Note that, $F(\beta)$ is a quadratic polynomial whose global minimum occurs at $\frac{\alpha}{\lambda_1\cdot(\alpha-1)}$ by basic calculus.
Therefore, the problem reduces to the following: 
\begin{align*}
\text{Choose $\lambda_1, \lambda_2, \alpha$ to minimize $\BIGP{\lambda_1+\lambda_2}$ while} \:
\begin{cases}
\lambda_1+\lambda_2 \ge \alpha, \\
F(\beta) \ge 0, & \text{$\forall$ $0\le\beta\le 1$.}
\end{cases}
\end{align*}
With respect fixed $\alpha$, the optimal value for the above linear program can be solved.
By fine-tuning $\alpha$ and choosing $\BIGP{\lambda_1, \lambda_2} \approx \BIGP{2.6, 2.6}$, we get a near-optimal solution to the above program, and the factor is $\lambda_1+\lambda_2 \le 5.2$. This proves the lemma.
\qed
\end{proof}

\noindent
We conclude our result by the following theorem.

\begin{theorem}
The packing-via-density(5.2) algorithm computes a feasible $5.2$-competitive schedule for the machine-minimizing job scheduling problem with unit job execution time.
\end{theorem}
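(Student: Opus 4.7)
The plan is to decompose the theorem into two parts: a competitive-ratio bound and a feasibility guarantee, and then chain together the reductions and potential estimates established earlier.

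First, for the competitive ratio. By construction, at every time $t\ge 0$ the algorithm uses at most $\lceil 5.2\cdot\hat\varrho(\MC{J}(t))\rceil$ machines. Since $\MC{J}(t)\subseteq \MC{J}$, the monotonicity $\hat\varrho(\MC{J}(t))\le \hat\varrho(\MC{J})$ holds by definition, and by Lemma~\ref{lemma-opt-offline-density} the right-hand side equals $\mathrm{OPT}(\MC{J})$. Hence the number of machines used at every moment is at most $\lceil 5.2\cdot\mathrm{OPT}(\MC{J})\rceil$, which yields the desired $5.2$-competitiveness provided that the produced schedule is feasible.

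For feasibility, the plan is to invoke Lemma~\ref{lemma-feasibility-characterization}. Because \emph{packing-via-density(5.2)} follows the earliest-deadline-first principle and assigns $\lceil 5.2\cdot\hat\varrho(\MC{J}(t))\rceil$ jobs at each time step, Lemma~\ref{lemma-feasibility-characterization} reduces feasibility to showing $\Phi_{5.2}(\MC{J},t_1,t_2)\ge 0$ for every $0\le t_1<t_2$. The main obstacle, namely verifying this potential inequality directly on an arbitrary instance, has already been absorbed into the two reductions and the analysis on non-decreasing sequences.

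Concretely, I would fix $0\le t_1<t_2$ and set $d:=t_2$. By Lemma~\ref{lemma-reduction-equal-deadline}, it suffices to establish non-negativity of $\Phi_{5.2}$ on the auxiliary job set $\MC{J}_d$ with a universal deadline $d$; and $\MC{J}_d(t_1,d)=\MC{J}(t_1,d)$ together with $\hat\varrho(\MC{J}_d(t))\le\hat\varrho(\MC{J}(t))$ transfers the bound back to $\MC{J}$. Then, iterating the non-decreasing reduction of Lemma~\ref{lemma-reduction-non-decreasing} (a terminating process, since each application strictly reduces the number of descents in the arrival sequence) further reduces the task to bounding $\Phi_{5.2}(\MC{J}^\uparrow_d,0,d)$ on a job set whose arrival sequence $\sigma^\uparrow_d$ is non-decreasing. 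For the reduced range $[0,t_1)$ there is nothing to prove because the corresponding workload contribution is zero by the universal-deadline construction, so only the tail potential matters; and that tail potential is exactly $\Phi_{5.2}(\MC{J}^\uparrow_d,0,d)$, which is non-negative by Lemma~\ref{lemma-5-2-potential}. Combining the two parts yields the theorem.
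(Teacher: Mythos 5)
Your proof follows essentially the same route as the paper's: the competitiveness bound via $\hat\varrho\BIGP{\MC{J}(t)}\le\hat\varrho\BIGP{\MC{J}}=\mathrm{OPT}\BIGP{\MC{J}}$ (Lemma~\ref{lemma-opt-offline-density}), and feasibility via Lemma~\ref{lemma-feasibility-characterization} combined with the chain of reductions (Lemmas~\ref{lemma-reduction-equal-deadline} and~\ref{lemma-reduction-non-decreasing}) down to Lemma~\ref{lemma-5-2-potential}. The one soft spot is your claim that the range $[0,t_1)$ ``contributes zero workload,'' which is not quite the right justification for passing from $\Phi_{5.2}\BIGP{\MC{J}_d,t_1,d}$ to the $t_1=0$ case handled by Lemma~\ref{lemma-5-2-potential} --- the clean argument is to discard all jobs arriving before $t_1$, which leaves $\MC{J}_d(t_1,d)$ unchanged while only decreasing every density, yielding a universal-deadline instance whose potential over $[0,d]$ coincides with the one over $[t_1,d]$ --- but the paper's own proof is equally terse on this point, so this is a shared (and easily repaired) omission rather than a gap specific to your argument.
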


\begin{proof}
This theorem follows directly from the description and the lemmas given in the content.
From Lemma~\ref{lemma-reduction-equal-deadline}, Lemma~\ref{lemma-reduction-non-decreasing}, and Lemma~\ref{lemma-5-2-potential}, we know that the potential function $\Phi_{5.2}\BIGP{\MC{J}, \ell, r}$ is non-negative for any job set $\MC{J}$ and any $0\le \ell < r$. By Lemma~\ref{lemma-feasibility-characterization}, this asserts the feasibility of \emph{packing-via-density(5.2)}.

Since $\hat\varrho\BIGP{\MC{J}(t)} \le \hat\varrho\BIGP{\MC{J}}$ for all $t \ge 0$, by Lemma~\ref{lemma-opt-offline-density}, we have $\hat\varrho\BIGP{\MC{J}(t)} \le \mathrm{OPT}\BIGP{\MC{J}}$ for all $t\ge 0$. Since the algorithm \emph{packing-via-density(5.2)} uses exactly $5.2 \times \hat\varrho\BIGP{\MC{J}(t)}$ number of machines, 
the resulting schedule 
is $5.2$-competitive.
\qed
\end{proof}

\section{Conclusion} \label{section_conclusion}

This paper presents online algorithms and competitive analysis for 
the machine-minimizing job scheduling problem with unit jobs. We disprove a false claim made by a previous paper regarding a further restricted case. We also provide a lower bound on the competitive factor of any online algorithm for this problem.



\bibliographystyle{siam}
\bibliography{online-bin-packing}

%
%
%


\end{document}